\newtheorem{theorem}{Theorem}
\newtheorem{thm}{Theorem}[section]
\newtheorem{lemma}{Lemma}[section]
\newtheorem{pr}{Proposition}[section]
\theoremstyle{definition}
\newcommand{\bel}{\begin{equation} \label}
\newcommand{\ee}{\end{equation}}
\theoremstyle{remark}
\newtheorem{remark}[theorem]{Remark}
\newtheorem{myremarks}[theorem]{Remarks}
\newcounter{numcount}
\newcommand{\labelnummer}{\mbox{\normalfont (\roman{numcount})}}%
\newenvironment{nummer}%
  {\let\curlabelspeicher\@currentlabel%
    \begin{list}{\labelnummer}%
      {\usecounter{numcount}\leftmargin0pt%
        \topsep0.5ex\partopsep2ex\parsep0pt\itemsep0ex\@plus1\p@%
        \labelwidth2.5em\itemindent3.5em\labelsep1em%
      }%
    \let\saveitem\item%
    \def\item{\saveitem%
      \def\@currentlabel{{\upshape\curlabelspeicher}$\,$\labelnummer}}%
    \let\savelabel\label%
    \def\label##1{\savelabel{##1}%
      \@bsphack%
        \ifmmode\else%
          \protected@write\@auxout{}%
          {\string\newlabel{##1item}{{\labelnummer}{\thepage}}}%
        \fi%
      \@esphack%
    }%
  }{\end{list}}%
\renewcommand{\appendix}{\def\thesection{\textsc{Appendix}}}
 \let\leq\le
 \let\geq\ge
\DeclareMathOperator{\tr}{tr\kern1pt}
\newcommand{\Ran}{\mathop\mathrm{Ran}\nolimits}
\newcommand{\N}{{\mathbb N}}
\newif\ifper\pertrue
\def\per{.}
\def\bti{\@ifnextchar[\bbti\bbbti}
\def\bbti[#1]#2{#2, #1.}
\def\bbbti#1{#1.}
\def\z{\@ifnextchar[\zz\zzz}
\def\zz[#1]#2#3#4#5{\perfalse\emph{#2} \textbf{#3}, #4 (#5) [#1]}
\def\zzz#1#2#3#4{\emph{#1} \textbf{#2}, #3 (#4)\ifper\per\fi\pertrue}
\def\pub{\@ifstar\pubstar\pubnostar}
\def\pubnostar{\@ifnextchar[\@@pubnostar\@pubnostar}
\def\@@pubnostar[#1]#2#3#4{#2, #3, #4, #1\ifper\per\fi\pertrue}
\def\@pubnostar#1#2#3{#1, #2, #3\ifper\per\fi\pertrue}
\def\pubstar[#1]#2#3#4{\perfalse #2, #3, #4 [#1]\pertrue}
\newcommand{\beq}{\begin{equation}}
\newcommand{\eeq}{\end{equation}}
\newcommand{\ba}{\begin{array}}
\newcommand{\ea}{\end{array}}
\newcommand{\bea}{\begin{eqnarray}}
\newcommand{\eea}{\end{eqnarray}}
\newcommand{\beas}{\begin{eqnarray*}}
\newcommand{\eeas}{\end{eqnarray*}}
\newcommand{\R}{\mathbb{R}}
\newcommand{\Z}{\mathbb{Z}}
\newcommand{\C}{\mathbb{C}}
\def\P{I\kern-.30em{P}}
\def\E{I\kern-.30em{E}}
\renewcommand{\E}{\mathbb{E}\mkern2mu}
\renewcommand{\P}{\mathbb{P}}
\begin{document}

\title[Decorrelation estimates for non rank one perturbations]{Decorrelation estimates for random Schr\"odinger operators with non rank one perturbations}

\author[P.\ D.\ Hislop]{Peter D.\ Hislop}
\address{Department of Mathematics,
    University of Kentucky,
    Lexington, Kentucky  40506-0027, USA}
\email{peter.hislop@uky.edu}

\author[M.\ Krishna]{M.\ Krishna}
\address{Institute for Mathematical Sciences,
IV Cross Road, CIT Campus,
 Taramani, Chennai 600 113,
 Tamil Nadu, India}
\email{krishna@imsc.res.in}

\thanks{PDH was partially supported by NSF through grant DMS-1103104.
MK was partially supported by IMSc Project 12-R\&D-IMS-5.01-0106. The authors thank F.\ Klopp and C.\ Shirley for discussions on eigenvalue statistics and decorrelation estimates.}


\begin{abstract}
   We prove decorrelation estimates for generalized lattice Anderson models on $\Z^d$ constructed with finite-rank perturbations in the spirit of Klopp \cite{klopp}. These are applied to prove that the local eigenvalue statistics $\xi^\omega_{E}$ and $\xi^\omega_{E^\prime}$, associated with two energies $E$ and $E'$ satisfying $|E - E'| > 4d$, are independent. That is, if $I,J$ are two bounded intervals, the random variables $\xi^\omega_{E}(I)$ and $\xi^\omega_{E'}(J)$, are independent and distributed according to a compound Poisson distribution whose L\'evy measure has finite support. We also prove that the extended Minami estimate implies that the eigenvalues in the localization region have multiplicity at most the rank of the perturbation. 
 \end{abstract}

\maketitle \thispagestyle{empty}



\tableofcontents


{\bf  AMS 2010 Mathematics Subject Classification:} 35J10, 81Q10,
35P20\\
{\bf  Keywords:}
random Schr\"odinger operators, eigenvalue statistics, decorrelation estimates, independence, Minami estimate, compound Poisson distribution \\


\section{Statement of the problem and results}\label{sec:introduction}
\setcounter{equation}{0}

We consider random Schr\"odinger operators $H^\omega = {\mathcal L} + V_\omega$ on the lattice Hilbert space $\ell^2 (\Z^d)$ (or, for matrix-valued potentials, on $\ell^2 (\Z^d) \otimes \C^{m_k}$),
and prove that certain natural random variables
associated with the local eigenvalue statistics around two distinct energies $E$ and $E^\prime$, in the region of complete localization $\Sigma_{\rm CL}$ and with $| E - E^\prime | > 4d$, are independent. From previous work \cite{hislop-krishna1}, these random variables
distributed according to a compound Poisson distribution.
The operator ${\mathcal L}$ is the discrete Laplacian on $\Z^d$, although this can be generalized.
For these lattice models, the random potential $V_\omega$ has the form
\beq\label{eq:potential1}
(V_\omega f)(j) = \sum_{i \in {\mathcal J}} \omega_i  (P_i f)(j),
\eeq   
where $\{ P_i \}_{i \in {\mathcal{J}}}$ is a family of finite-rank projections with the same rank $m_k \geq 1$, the set $\mathcal{J}$ is a sublattice of $\Z^d$, and $\sum_{i \in {\mathcal{J}}} P_i = I$. We assume that $P_i = U_i P_0 U_i^{-1}$, for $i \in \mathcal{J}$,
where $U_i$ is the unitary implementation of the translation group
$(U_i f)(k) = f(k+i)$, for $i,k \in \Z^d$. 
The coefficients $\{ \omega_i \}$ are a family of independent, identically distributed (iid) random variables with a bounded density of compact support on a product probability space $\Omega$ with probability measure $\P$.
It follows from the conditions above that the family of random Schr\"odinger operators $H^\omega$ is ergodic with respect to the translations generated by $\mathcal{J}$.

One example on the lattice is the polymer model. For this model, the projector $P_i = \chi_{\Lambda_k(i)}$ is the characteristic function
on the cube $\Lambda_k(i)$ of side length $k$ centered at $i \in \Z^d$. The rank of $P_i$ is $(k+1)^d$
and the set $\mathcal{J}$ is chosen so that $\cup_{i \in \mathcal{J}} {\Lambda_k(i)} = \Z^d$.
Another example is a matrix-valued model for which $P_i$, $i \in \Z^d$, projects onto an $m_k$-dimensional subspace,
 and $\mathcal{J} = \Z^d$. The corresponding Schr\"odinger operator is
\beq\label{eq:model1}
H^\omega = {\mathcal L} +  \sum_{i \in {\mathcal{J}}} \omega_i  P_i ,
\eeq
where ${\mathcal L}$ is the discrete lattice Laplacian $\Delta$ on $\ell^2 (\Z^d)$,
or $\Delta \otimes I$ on $\ell^2(\Z^d) \otimes \C^{m_k}$ (or, more generally, $\Delta \otimes A$,
where $A$ is a nonsingular $m_k \times m_k$ matrix), respectively.
In the following, we denote by $H_{\omega, \ell}$ (or simply as
$H_\ell$ omitting the $\omega$) the matrices $\chi_{\Lambda_\ell}H^\omega\chi_{\Lambda_\ell}$ and similarly $H_{\omega,L}, H_L$ by replacing $\ell$ with $L$, for positive integers $\ell$ and $L$.

A lot is known about the eigenvalue statistics for random Schr\"odinger operators on $\ell^2 (\R^d)$.
When the projectors $P_i$ are rank one projectors, the local eigenvalue statistics in the localization regime has been proved to be given by a Poisson process by Minami \cite{minami1} (see also Molchanov \cite{molchanov} for a model on $\R$ and Germinet-Klopp \cite{germinet-klopp} for a comprehensive discussion and additional results). For the non rank one case, Tautenhahn and Veseli\'c \cite{tautenhahn-veselic} proved a Minami estimate for certain models that may be described as weak perturbations of the rank one case. The general non finite rank case was studied by the authors in \cite{hislop-krishna1} who proved that, roughly speaking, the local eigenvalue statistics are compound Poisson. This result also holds for random Schr\"odinger operators on $\R^d$.

In this paper, we further refine these results for lattice models with non rank one projections and prove, roughly speaking, that the processes associated with two distinct energies are independent. Klopp \cite{klopp} proved decorrelation estimates for lattice models in any dimension. He applied them to show that the local eigenvalue point processes at distinct energies converge to independent Poisson processes (in dimensions $d >1$ the energies need to be far apart as is the case for the models studied here).
Shirley \cite{shirley} extended the family of one-dimensional lattice models for which the decorrelation estimate may be proved to include alloy-type models with correlated random variables, hopping models, and certain one-dimensional quantum graphs.

\subsection{Asymptotic independence and decorrelation estimates}\label{subsec:decorrelation1}

The main result is the asymptotic independence of random variables associated with the local eigenvalue statistics
centered at two distinct energies $E$ and $E^\prime$ satisfying $|E-E^\prime | > 4d$.

 We note that in one-dimension there are stronger results and the condition $|E-E^\prime | > 4d$ is not needed. Our results are inspired by the work of Klopp \cite{klopp} for the Anderson models on $\Z^d$ and of Shirley \cite{shirley} for related models on $\Z^d$.
The condition $|E - E^\prime | > 4d$ requires that the two energies be fairly far apart. For example, if $\omega_0 \in [-K, K]$ so
that the deterministic spectrum $\Sigma = [ -2d - K, 2d + K]$, the region of complete localization $\Sigma_{\rm CL}$ is near the band edges $\pm ( 2d + K)$. In this case, one can consider $E$ and $E^\prime$ near each of the band edges.
Our main result on asymptotic independence is the following theorem.

\begin{thm}\label{thm:decorrelation-lattice1}
Let $E,E' \in \Sigma_{\rm CL}$ be two distinct energies with $|E - E'| > 4d$. Let $\xi_{\omega, E}$, respectively, $\xi_{\omega, E'}$,  be a limit point of the local eigenvalue statistics centered at $E$, respectively, at $E'$. Then these two processes are independent. For any bounded intervals $I,J \in \mathcal{B}(\R)$, the random variables $\xi_{\omega, E}(I)$ and $\xi_{\omega, E'}(J)$ are independent random variables distributed according to a compound Poisson process.
\end{thm}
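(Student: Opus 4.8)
The plan is to deduce the asymptotic independence from a decorrelation estimate combined with the compound Poisson convergence of each individual process, which is already established in \cite{hislop-krishna1}. The overall strategy follows the scheme introduced by Klopp \cite{klopp} and systematized in Germinet--Klopp \cite{germinet-klopp}, adapted here to projections of rank $m_k > 1$. Since each marginal $\xi_{\omega,E}$ and $\xi_{\omega,E'}$ is already known to be compound Poisson, it suffices to prove that the joint distribution factorizes, i.e.\ that the joint Laplace functional of the two processes converges to the product of the two limiting functionals.

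First I would fix bounded intervals $I,J$ and pass to the rescaled windows $I_L := E + L^{-d} I$ and $J_L := E' + L^{-d} J$, so that each window captures $O(1)$ eigenvalues of $H_L$ on average. Then I would partition $\Lambda_L$ into $N \sim (L/\ell)^d$ disjoint subboxes $\Lambda_\ell(j)$ at an intermediate scale $\ell = \ell(L)$ with $\ell \to \infty$ and $\ell/L \to 0$. Since $E, E' \in \Sigma_{\rm CL}$, localization together with a decoupling argument lets me replace $\tr \chi_{I_L}(H_L)$ and $\tr \chi_{J_L}(H_L)$ by the block sums $\sum_j \tr \chi_{I_L}(H_\ell(j))$ and $\sum_j \tr \chi_{J_L}(H_\ell(j))$, up to an error that tends to zero in probability.

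The technical heart is the decorrelation estimate at the single-box level: I would prove that the probability that one subbox contributes eigenvalues to both windows simultaneously,
\[
\P\left[ \tr \chi_{I_L}(H_\ell) \geq 1 \ \text{and}\ \tr \chi_{J_L}(H_\ell) \geq 1 \right],
\]
is of smaller order than the single-window Wegner bound $O\big((\ell/L)^d\big)$---ideally $O\big((\ell/L)^{2d}\big)$ up to logarithmic factors---so that the expected number of doubly occupied boxes, $N$ times this probability, tends to zero. Together with the extended Minami estimate, which bounds the number of eigenvalues in each window per box by the rank $m_k$, this forces the two windows to draw their eigenvalues from essentially disjoint collections of subboxes. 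Because $H_\ell(j)$ depends only on the couplings $\{\omega_i : i \in \Lambda_\ell(j)\}$ and these are independent across $j$, the resulting $I_L$- and $J_L$-counts become asymptotically independent, their joint Laplace functional factorizes, and letting $L \to \infty$ and then $\ell \to \infty$ yields the claimed independence.

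The main obstacle will be the decorrelation estimate itself in the finite-rank setting. In Klopp's rank-one model one exploits the monotone, essentially one-dimensional dependence of a single eigenvalue on its coupling constant; here each eigenvalue near $E$ or $E'$ sits in a cluster of multiplicity up to $m_k$, and I must control the joint motion of two such clusters under variation of the couplings. This is precisely where the separation $|E - E'| > 4d$ enters: since the free operator $\mathcal L$ has spectral width $4d$, the two windows are fed by genuinely different spectral data, which should permit a vector-valued transversality argument bounding the joint occupation probability. Making this quantitative and uniform in $\ell$---with clusters rather than simple eigenvalues---is the delicate point of the whole argument.
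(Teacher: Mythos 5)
Your proposal follows essentially the same route as the paper: the paper likewise replaces the full-volume statistics by an array of independent block processes $\eta^\omega_{\ell,p}$ at scale $\ell = L^\alpha$ (via \cite[Proposition 4.4]{hislop-krishna1}), bounds the diagonal error by $N_L$ times the per-box joint occupation probability, and kills that error with exactly the decorrelation estimate you identify (Proposition \ref{prop:decorrelation-prop1}, proved by a Klopp-style transversality argument applied to cluster averages $\mathcal{T}(\omega)$, which is where the width $4d$ of the lattice Laplacian and the extended Minami estimate enter, just as you anticipate). The only differences are cosmetic: you phrase factorization via Laplace functionals and a double limit $L \to \infty$ then $\ell \to \infty$, while the paper works with the events $\{\xi(I) \geq 1\}$ and ties $\ell = L^\alpha$ to a single limit.
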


We refer to \cite{germinet-klopp} for a description of the region of complete localization $\Sigma_{\rm CL}$. For information on L\'evy processes, we refer to the books by Applebaum \cite{applebaum} and by Bertoin \cite{bertoin}.
Theorem \ref{thm:decorrelation-lattice1} follows (see section \ref{sec:proof1}) from the following decorrelation estimate. We assume that $L > 0$ is a positive integer, and that $\ell := [ L^\alpha]$ is the greatest integer less than $L^\alpha$ for an exponent $0 < \alpha < 1$. For polymer type models, we assume that $m_k$ divides $L$ and $\ell$.

\begin{pr}\label{prop:decorrelation-prop1}
We choose positive numbers $(\alpha, \beta)$ satisfying \eqref{eq:constraints1} and length scales $L$ and $\ell := [L^\alpha]$ as described above. For a pair of energies $E, E^\prime \in \Sigma_{\rm CL}$, the region of complete localization, with $|E - E^\prime| > 4d$, and bounded intervals $I,J \subset \R$, we define $I_L(E) := L^{-d}I + E$ and $J_L (E^\prime) := L^{-d} J + E^\prime$ as two scaled energy intervals centered at $E$ and $E^\prime$, respectively. We then have
\beq\label{eq:prob1}
\P \{ ({\rm Tr} E_{H_{\omega,\ell}} (I_L (E)) \geq 1 )  \cap  ({\rm Tr} E_{H_{\omega,\ell}} (J_L (E^\prime)) \geq 1 ) \} \leq \frac{C_0}{L^{2d(1 - 2 \beta - \alpha)}} .
\eeq
\end{pr}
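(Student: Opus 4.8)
The plan is to adapt Klopp's strategy to the finite-rank setting, combining a Wegner counting bound with a transversality (decorrelation) argument that exploits the hypothesis $|E-E'|>4d$. First I would establish the two marginal estimates. By the Wegner estimate for $H_{\omega,\ell}$ one has $\P({\rm Tr}\,E_{H_{\omega,\ell}}(I_L(E))\ge 1)\le \E[{\rm Tr}\,E_{H_{\omega,\ell}}(I_L(E))]\lesssim \ell^d |I_L(E)| = L^{-d(1-\alpha)}|I|$, and likewise for $J_L(E')$. The target estimate \eqref{eq:prob1} then amounts to showing that the two window-occupation events are \emph{almost independent}: their joint probability is bounded by the product of these two marginals, up to a decorrelation slack of order $L^{4d\beta}$ coming from the auxiliary exponent $\beta$. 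Since the extended Minami estimate caps multiplicities at $m_k$, I may also discard, up to negligible probability, the configurations in which any eigenvalue in $I_L(E)\cup J_L(E')$ fails to be simple, so that first-order perturbation theory applies to each relevant eigenvalue.

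The heart of the matter is a deterministic transversality identity for the gradients of the eigenvalues with respect to the couplings. For a simple eigenvalue $\lambda_n$ of $H_{\omega,\ell}$ with normalized eigenvector $\varphi_n$, the Feynman--Hellmann formula gives $\partial_{\omega_i}\lambda_n=\langle\varphi_n,P_i\varphi_n\rangle=\|P_i\varphi_n\|^2\ge 0$. Two identities follow from the structure of the model. Because $\sum_i P_i=I$ (equal to $\chi_{\Lambda_\ell}$ on the box), summing gives $\sum_i\partial_{\omega_i}\lambda_n=\|\varphi_n\|^2=1$, so the gradient is a probability vector; and pairing the gradient against $\omega$ itself gives $\sum_i\omega_i\,\partial_{\omega_i}\lambda_n=\langle\varphi_n,V_\omega\varphi_n\rangle=\lambda_n-\langle\varphi_n,\Delta\varphi_n\rangle$. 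Since $\spec\Delta\subseteq[-2d,2d]$, the kinetic term $\langle\varphi_n,\Delta\varphi_n\rangle$ lies in an interval of length $4d$, so for $\lambda_n\in I_L(E)$ and $\lambda_m\in J_L(E')$ the $2\times 2$ determinant
\[
\det\begin{pmatrix} \sum_i\partial_{\omega_i}\lambda_n & \sum_i\partial_{\omega_i}\lambda_m\\ \sum_i\omega_i\partial_{\omega_i}\lambda_n & \sum_i\omega_i\partial_{\omega_i}\lambda_m \end{pmatrix} = (\lambda_m-\lambda_n)-\big(\langle\varphi_m,\Delta\varphi_m\rangle-\langle\varphi_n,\Delta\varphi_n\rangle\big)
\]
has modulus at least $|E-E'|-4d-O(L^{-d})>0$. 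This is precisely where the hypothesis $|E-E'|>4d$ is used: it forces the gradients $\nabla_\omega\lambda_n$ and $\nabla_\omega\lambda_m$ to be quantitatively linearly independent, the uniform-shift direction $(1,\dots,1)$ and the dilation direction $\omega$ separating the two eigenvalues' dependence on the disorder.

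With transversality established, I would run the decorrelation as a conditional-probability argument. Conditioning on the disorder so as to fix $\lambda_n$ within $I_L(E)$, the transversality guarantees a complementary parameter along which $\lambda_m$ moves through $J_L(E')$ with nonvanishing speed while $\lambda_n$ stays put; a conditional spectral-averaging (Wegner-type) bound then shows the conditional probability of $\lambda_m\in J_L(E')$ is still $\lesssim \ell^d |J_L(E')|$, up to the slack $L^{4d\beta}$ absorbed by enlarging the window to the scale dictated by $\beta$. Multiplying by the unconditional marginal for $I_L(E)$ and assembling the factors $\ell^{2d}=L^{2d\alpha}$, $L^{-2d}$, and the slack $L^{4d\beta}$ yields the bound $C_0\, L^{-2d(1-2\beta-\alpha)}$ of \eqref{eq:prob1}.

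The main obstacle is this last step: converting the deterministic lower bound on the Jacobian into the conditional Wegner estimate. The two eigenvalues are functions of the \emph{same} random variables $\{\omega_i\}$, and the transversality of the gradients only decouples them after a genuine change of variables in $\omega$-space; the dilation direction has $\omega$-dependent coefficients and can push variables outside the support of the density, so the change of variables must be implemented carefully, most cleanly by conditioning on the empirical mean of the $\omega_i$ and running a one-parameter eigenvalue flow with quantitative density control, as in Klopp and Shirley. Two features special to the finite-rank case must be handled en route: Feynman--Hellmann requires simple eigenvalues, so the reduction must first invoke the extended Minami estimate to remove the small-probability degenerate configurations of multiplicity up to $m_k$; and the scalar weights $|\varphi_n(i)|^2$ of the rank-one theory are replaced throughout by $\|P_i\varphi_n\|^2$, so the normalization $\sum_i\partial_{\omega_i}\lambda_n=1$ and the energy identity must be re-derived from $\sum_i P_i=I$ rather than imported from the lattice Anderson model.
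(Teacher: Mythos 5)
Your transversality computation is exactly the right mechanism, and it is the same one the paper uses: the normalization $\sum_i \partial_{\omega_i}\lambda = 1$ coming from $\sum_i P_i = I$, and the pairing $\omega\cdot\nabla_\omega\lambda = \lambda - \langle\varphi,\Delta\varphi\rangle$, so that $|E-E'|>4d$ forces an $\ell^1$ separation of the two gradients of size $(\Delta E - 4d)/K$; compare \eqref{eq:variation2}--\eqref{eq:variation5}. But there is a step in your reduction that genuinely fails: you propose to use the extended Minami estimate to discard, as negligible, all configurations in which some eigenvalue in the windows is degenerate, so that Feynman--Hellmann applies. The extended Minami estimate only bounds $\P\{X_\ell(J_L(E)) > m_k\}$, i.e.\ the probability of \emph{more} than $m_k$ eigenvalues in the window; it gives no control whatsoever on degeneracies of order between $2$ and $m_k$, and these are not rare. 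Concretely, for the matrix-valued model in the paper's class, ${\mathcal L} = \Delta\otimes I$ on $\ell^2(\Z^d)\otimes\C^{m_k}$ with $P_i$ the projection onto the fiber over site $i$, one has $H^\omega = (\Delta + \sum_i \omega_i |\delta_i\rangle\langle\delta_i|)\otimes I_{m_k}$, so \emph{every} finite-volume eigenvalue has multiplicity exactly $m_k$ with probability one: the set of ``simple'' configurations you propose to keep is empty. This is precisely why the paper never reduces to simple eigenvalues; instead it replaces each individual eigenvalue by the cluster mean $\mathcal{T}(\omega) = \frac{1}{k_1}{\rm Tr}\big(H_{\omega,\ell}E_{H_{\omega,\ell}}(I_L(E))\big)$, which lies in $I_L(E)$, is monotone in each $\omega_s$, has $\ell^1$-normalized gradient, and satisfies the same two identities by degenerate perturbation theory and a contour-integral representation (see \eqref{eq:deg-ev2}--\eqref{eq:deg-ev3} and \eqref{eq:hessian1}--\eqref{eq:hessian2}); no simplicity is ever invoked.

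The second gap is the one you flag yourself as ``the main obstacle'': converting gradient transversality into a probability bound is the actual crux of the proof, and your sketch (conditioning on the empirical mean, a one-parameter eigenvalue flow) is not carried out and runs into the support problems you acknowledge. Note also that your $2\times 2$ determinant has rows given by directional derivatives along $(1,\dots,1)$ and along $\omega$; it certifies linear independence of $\nabla_\omega\lambda_n$ and $\nabla_\omega\lambda_m$, but it is not the Jacobian of the map with respect to any pair of the underlying variables $(\omega_i,\omega_j)$, so it cannot be fed into a change of variables or a conditional Wegner bound as it stands. The paper closes this gap following Klopp: from the $\ell^1$ separation $\|\nabla_\omega(\mathcal{T}-\mathcal{T}')\|_1 \geq (\Delta E - 4d)/K$, Klopp's Lemma 2.5 produces a specific pair $(i,j)$, $i\neq j$, with coordinate Jacobian $J_{ij}(\mathcal{T},\mathcal{T}')\geq \lambda(L) := (\Delta E - 4d)K^{-1}L^{-\beta d}$ (the event that all pairs fail this is empty once $\alpha < \frac{2}{5}\beta$, by \eqref{eq:jac-lb2}--\eqref{eq:gradients1}); then the diffeomorphism/stability Lemma \ref{lemma:diffeom1}, which needs the Hessian bound \eqref{eq:hessian4}, shows that on this event $(\omega_i,\omega_j)$ is confined to a box of side $L^{-d}\lambda^{-2}(L)$, of probability at most $CL^{-2d}\lambda^{-4}(L)$ by the density bound; finally a union bound over the at most $\ell^{2d}$ pairs gives $\ell^{2d}L^{-2d}\lambda^{-4}(L) = C\left(K/(\Delta E - 4d)\right)^4 L^{-2d(1-2\beta-\alpha)}$, which is \eqref{eq:prob1}. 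Your accounting of the factors $\ell^{2d}$, $L^{-2d}$, $L^{4d\beta}$ matches this scheme, but in your outline those factors are asserted rather than derived, and the two missing ingredients (cluster means instead of simple eigenvalues; Lemma 2.5 plus the diffeomorphism argument instead of conditioning) are exactly what the paper supplies.
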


The extended Minami estimate \cite{hislop-krishna1} implies that we need to estimate:
\beq\label{eq:prob2}
\P \{ ({\rm Tr} E_{H_{\omega,\ell}} (I_L (E)) \leq m_k )  \cap  ({\rm Tr} E_{H_{\omega,\ell}} (J_L(E^\prime)) \leq m_k ) \}
\eeq
In fact, we consider the more general estimate:
\beq\label{eq:prob3}
\P \{ ({\rm Tr} E_{H_{\omega,L}} (I_L(E)) = k_1 )  \cap  ({\rm Tr} E_{H_{\omega,L}} (J_L(E^\prime)) = k_2 ) \},
\eeq
where $k_1, k_2$ are positive integers independent of $L$.

We allow that there may be several eigenvalues in $I_L(E)$ and $J_L (E^\prime)$ with nontrivial multiplicities.
To deal with this, we introduce the mean trace of the eigenvalues $E_j(\omega)$ of $H_{\omega, \ell}$ in the interval $I_L(E)$:
\beq\label{eq:defn-weight-ave1}
\mathcal{T}(\omega) :=
\frac{Tr\big(H_{\omega,\ell} E_{H_{\omega,\ell}}(I_L(E))\big)}{Tr\big(E_{H_{\omega,\ell}}(I_L(E))\big)} = \frac{1}{k_1} \sum_{j=1}^{k_1} E_j(\omega),
\eeq
where $k_1 := Tr\big(E_{H_{\omega,\ell}}(I_L(E))\big)$ is the number of eigenvalues, including multiplicity,
of $H_{\omega, \ell}$ in $I_L(E)$.
Similarly, we define
\beq\label{eq:defn-weight-ave2}
\mathcal{T}^\prime(\omega) := \frac{Tr\big(H_{\omega,\ell} E_{H_{\omega,\ell}}(J_L(E^\prime))\big)}{Tr\big(E_{H_{\omega,\ell}}(J_L(E^\prime))\big)}
= \frac{1}{k_2} \sum_{j=1}^{k_2} E_j(\omega).
\eeq
We will show in section \ref{sec:sum-estimates1} that these weighted sums behave like an effective
eigenvalues in each scaled interval.

As another application of the extended Minami estimate, we prove that the multiplicity of eigenvalues in $\Sigma_{\rm CL}$ is at most the multiplicity of the perturbations $m_k$ in dimensions $d \geq 1$. The proof of this fact follows the argument of Klein and Molchanov \cite{klein-molchanov}. For $d=1$, Shirley \cite{shirley} proved that the usual Minami estimate holds for the dimer model so the eigenvalues are almost surely simple.



\subsection{Contents}\label{subsec:contents}

We present properties of the weighted average of eigenvalues in section \ref{sec:sum-estimates1}, including gradients and Hessian estimates. The proof of the main technical result, Proposition \ref{prop:decorrelation-prop1}, is presented in section \ref{sec:proof-prop1}. The proof of asymptotic independence is given in section \ref{sec:proof1}. We show in section \ref{sec:multiplicity1} that the argument of Klein-Molchanov \cite{klein-molchanov} applies to higher rank perturbations and implies that the multiplicity of eigenvalues in $\Sigma_{\rm CL}$ is at most $m_k$, the uniform rank of the perturbations.


\section{Estimates on weighted sums of eigenvalues}\label{sec:sum-estimates1}

In this section, we present some technical results on weighted sums of eigenvalues of $H_{\omega, \ell}$
defined in \eqref{eq:defn-weight-ave1}-\eqref{eq:defn-weight-ave2}.
These are used in section \ref{sec:proof1} to prove the main technical result \eqref{eq:prob1}.


\subsection{Properties of the weighted trace}\label{subsec:weight-trace1}

When the total number of eigenvalues of $H_{\omega,\ell}$ in $J_L(E) := L^{-d}I + E$ is $k_1$, we get,
\beq\label{defn:trace3}
\mathcal{T}(\omega) := \mathcal{T}_\ell(E,k_1) := \mathcal{T}_\ell (E, k_1, \omega)
= \frac{1}{k_1} \sum_{j=1}^{k_1} E_j(\omega),
\eeq
for eigenvalues $E_j(\omega) \in {J}_L (E)$.
Properties (1)-(3) below are valid
for the similar expression obtained by replacing $k_1$ with $k_2$, the interval $I$ with $J$, and the energy $E$ with $E^\prime$. We will write
$$
\mathcal{T}^\prime(\omega) := \mathcal{T}_\ell(E^\prime, k_2) := \mathcal{T}(E^\prime, k_2, \omega).
$$

The weighted eigenvalue average behaves like an effective eigenvalue in the following sense:
\begin{enumerate}
\item $\mathcal{T}_\ell (E, k_1, \omega) \in {J}_L (E)$,
 so the weighted average of the eigenvalue cluster in ${J}_L(E)$ behaves as an eigenvalue in ${J}_L (E)$.

\item Let $E_j (\omega) \in {J}_L (E)$ be an eigenvalue of multiplicity $m_j$. Then a derivative may be computed as follows. Let $\varphi_{j,i}$, for $i=1, \ldots, m_j$, be an orthonormal basis of the eigenspace for $E_j (\omega)$. Then,
\beq\label{eq:deg-ev1}
0 = \frac{\partial}{\partial \omega_s} \sum_{i=1}^{m_j} \langle \varphi_{j,i}, (H_{\omega, \ell} - E_j(\omega)) \varphi_{j,i} \rangle,
\eeq
so we obtain
\beq\label{eq:deg-ev2}
\frac{\partial E_j(\omega)}{\partial \omega_s} = \frac{1}{m_j} \sum_{i=1}^{m_j} \| P_s \varphi_{j,i} \|^2,
\eeq
where $P_s$ is the projector associated with the random variable $\omega_s$.

\item Suppose there are $\hat{k}_1$ distinct eigenvalues in $J_L (E)$ each with multiplicity $m_j$ so $\sum_{j=1}^{\hat{k}_1} m_j = k_1$.
Then, we have
\beq\label{eq:deg-ev3}
\frac{\partial \mathcal{T}_\ell (E, k_1, \omega)}{\partial \omega_s} =
\frac{1}{k_j} \sum_{j=1}^{\hat{k}_1} \sum_{i=1}^{m_j} \| P_s \varphi_{j,i} \|^2 \geq 0.
\eeq
This shows that $\mathcal{T}_\ell(E, k_1, \omega)$ is non-decreasing as a function of $\omega_s$.

\item It follows from \eqref{eq:deg-ev3} that the $\omega$-gradient of the weighted trace is normalized:
$\| \nabla_\omega \mathcal{T}(\omega) \|_{\ell^1} = 1$.

\end{enumerate}

\begin{remark}\label{remark:t-separation}
It follows from property (1) above and the fact that the intervals $I_L(E)$ and $J_L(E)$ are $\mathcal{O}(L^{-d})$, that if $|E - E^\prime| > 4d$,
then $|\mathcal{T}(\omega) - \mathcal{T}(\omega^\prime)| > 4d - cL^{-d}$, for some $c > 0$.
We will use this result below.
\end{remark}



\subsection{Variational formulae}\label{subsec:variation1}

We can estimate the variation of the mean trace with respect to the random variables as follows. The $\omega$-directional derivative is
 \bea\label{eq:variation2}
 \omega \cdot \nabla_\omega (\mathcal{T}(\omega) - \mathcal{T}^\prime (\omega) ) &=& \frac{1}{k_1} \sum_{i=1}^{k_1}  \omega \cdot \nabla_\omega E_i (\omega) -
 \frac{1}{k_2} \sum_{j=1}^{k_2} \omega \cdot \nabla_\omega  E_j (\omega) \nonumber \\
  &=& \mathcal{T}(\omega) - \mathcal{T}^\prime (\omega) - \frac{1}{k_1} \sum_{i=1}^{k_1} \langle \varphi_i, (- \Delta) \varphi_i \rangle
    \nonumber \\
    & & + \frac{1}{k_2} \sum_{j=1}^{k_2} \langle \varphi_j, (- \Delta) \varphi_j \rangle .
  \eea

On the lattice, the absolute value of each sum involving the Laplacian may be bounded above by $2 d$. If we assume that
$$
| \mathcal{T}(\omega) - \mathcal{T}^\prime(\omega) | \geq \Delta E
$$
then we obtain from \eqref{eq:variation2},
\bea\label{eq:variation3}
\Delta E - 4d & \leq & | \mathcal{T}(\omega) - \mathcal{T}^\prime(\omega) | - 4d \nonumber \\
 & \leq & |\omega \cdot \nabla_\omega (\mathcal{T}(\omega) - \mathcal{T}^\prime (\omega) )|.
\eea
As the number of components of $\omega$ is bounded by $\ell^d$ and $|\omega_j | \leq K$, it follows by Cauchy-Schwartz inequality that
\beq\label{eq:variation4}
\| \nabla_\omega (\mathcal{T}(\omega) - \mathcal{T}^\prime (\omega) )\|_2 \geq \frac{\Delta E - 4d}{K} \frac{1}{(2\ell + 1)^{d/2}}.
\eeq
We also obtain an $\ell^1$ lower bound:
\beq\label{eq:variation5}
\| \nabla_\omega (\mathcal{T}(\omega) - \mathcal{T}^\prime (\omega) )\|_1 \geq \frac{\Delta E - 4d}{K} .
\eeq


\subsection{Hessian estimate}\label{subsec:hessian1}

The Hessian of $\mathcal{T}(\omega)$ has $ij^{\rm th}$ matrix elements given by
\beq\label{eq:hessian1}
{\rm Hess}(\mathcal{T})_{ij} = \frac{1}{k_1} \sum_{m=1}^{k_1} \frac{\partial^2}{\partial \omega_i \partial \omega_j} E_m (\omega) .
\eeq
It is convenient to compute this using trace notation. Let $P_E$ denote the spectral projection onto the eigenspace of $H_{\omega, \ell}$
corresponding to the eigenvalues $E_m(\omega)$ in $J_L(E)$. Let $\gamma_E$ be a simple closed contour containing only these eigenvalues of $H_{\omega, \ell}$ with a counter-clockwise orientation. Since the weighted mean of the eigenvalues may be expressed as
$$
\mathcal{T}(\omega) = \frac{1}{k_1} {\rm Tr} H_{\omega,\ell} P_E,
$$
and the projection has the representation
$$
P_E = \frac{1}{2 \pi i} \int_{\gamma_E} R(z) ~dz, ~~~ R(z) := (H_{\omega, \ell} - z)^{-1},
$$
it follows that
\beq\label{eq:tr-deriv1}
\frac{\partial}{\partial \omega_j} \mathcal{T}(\omega) = \frac{-1}{2 \pi i k_1} \int_{\gamma_E} {\rm Tr} \{ R(z) P_j R(z) \} ~z dz ,
\eeq
where $P_j$ is the finite-rank projector associated with site $j$ or block $j$, depending on the model.
Computing the second derivative, the matrix elements of the Hessian of $\mathcal{T}(\omega)$ are
\bea\label{eq:hessian2}
{\rm Hess}(\mathcal{T})_{ij} & =  & \frac{1}{2 \pi i k} \int_{\gamma_E} {\rm Tr} \{ R(z)P_i R(z)P_j R(z)  \nonumber \\
 & &  +   R(z) P_j R(z) P_j R(z) \} ~ z dz
\eea
This formula will provide the equivalent of Lemma 2.3 \cite{klopp}, for both $Hess({\mathcal T})_{ij}, Hess(\mathcal{T}^\prime)_{ij}$.


\begin{lemma}\label{lemma:hessian1}
The Hessian of the weighted average $\mathcal{T}(\omega)$ of the eigenvalues of $H_\ell (\omega)$ in an interval of order $L^{-d}$ satisfies the bound:
\bea\label{eq:hessian3}
\| {\rm Hess}(\mathcal{T})_{ij} \|_{\ell^\infty \rightarrow \ell^1} & \leq &  |\gamma_E|^2 \sup_{z \in \gamma_E} \| P_i R(z)^2 P_j \|_1 \| P_j R(z) P_1 \|_1 \nonumber \\
 & \leq & C \frac{L^{-2d}}{ ({\rm dist} ( \gamma_E, \sigma(H_{\omega, \ell}) ))^3}.
 \eea
Since the Wegner estimate insures that ${\rm dist} ( \gamma_E, \sigma(H_{\omega, \ell}) ) \sim \ell^{-d}$ with probability greater than $1 - C_W (\ell / L)^d$, we obtain
\beq\label{eq:hessian4}
 \| {\rm Hess}(\mathcal{T})_{ij} \|_{\ell^\infty \rightarrow \ell^1}  \leq C L^{-2d} \ell^{3d} \leq C L^{3 d \alpha - 2d} ,
 \eeq
 so if $0 < \alpha < 2/3$, the Hessian is vanishes as $L \rightarrow \infty $.  The above statements are also valid for $\mathcal{T}^\prime (\omega)$.
\end{lemma}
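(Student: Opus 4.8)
The plan is to take the contour-integral representation \eqref{eq:hessian2} of the Hessian as the starting point and to bound the $\ell^\infty\to\ell^1$ operator norm of the matrix $({\rm Hess}(\mathcal{T})_{ij})$ by estimating the integrand uniformly on the contour $\gamma_E$. The two ingredients that make this work are structural: each projection $P_i$ has finite rank $m_k$, so any sandwich $P_iAP_j$ is trace class with $\|P_iAP_j\|_1\le m_k\|A\|$; and, on the contour, the resolvent is controlled in operator norm by $\|R(z)\|\le {\rm dist}(z,\sigma(H_{\omega,\ell}))^{-1}$. Since $z$ ranges over $\gamma_E$ with $|z|=\mathcal{O}(1)$ and the contour length is proportional to the width $L^{-d}$ of the scaled interval $I_L(E)$, the whole estimate reduces to a uniform bound on the trace appearing in \eqref{eq:hessian2}.

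To produce that bound I would regroup the triple product in the integrand. Fixing $z$, the three factors $R(z)$ commute, so by cyclicity of the trace the symmetrized integrand ${\rm Tr}\{RP_iRP_jR+RP_jRP_iR\}$ can be written as a trace of a product of the two finite-rank blocks $P_iR(z)^2P_j$ and $P_jR(z)P_i$. Applying $|{\rm Tr}(AB)|\le\|A\|_1\|B\|_1$ and then the finite-rank bound to each factor gives $\|P_iR(z)^2P_j\|_1\le m_k\|R(z)\|^2$ and $\|P_jR(z)P_i\|_1\le m_k\|R(z)\|$, hence a uniform bound by $m_k^2\,{\rm dist}(z,\sigma(H_{\omega,\ell}))^{-3}$. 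Collecting the length factor from $\int_{\gamma_E}$ and the interval scaling yields the first line of \eqref{eq:hessian3}, and replacing $\|R(z)\|$ by the reciprocal distance yields the second line, $CL^{-2d}\,{\rm dist}(\gamma_E,\sigma(H_{\omega,\ell}))^{-3}$.

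It then remains to insert the size of the spectral gap. Here I would invoke the Wegner estimate for the finite box $\Lambda_\ell$: with probability at least $1-C_W(\ell/L)^d$ one may choose $\gamma_E$ so that ${\rm dist}(\gamma_E,\sigma(H_{\omega,\ell}))\sim\ell^{-d}$, the mean level spacing on scale $\ell$. Substituting this into \eqref{eq:hessian3} gives \eqref{eq:hessian4}, namely $\|{\rm Hess}(\mathcal{T})\|_{\ell^\infty\to\ell^1}\le CL^{-2d}\ell^{3d}\le CL^{3d\alpha-2d}$, which tends to $0$ as $L\to\infty$ precisely when $0<\alpha<2/3$. The identical chain of estimates applies verbatim to $\mathcal{T}'(\omega)$, since only the finite rank of the projections and the resolvent bound were used.

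The step I expect to be the main obstacle is the bookkeeping of the two distinct length scales and its interaction with the operator-norm claim. The contour length is tied to the interval width $L^{-d}$, whereas the resolvent is large on the scale $\ell^{-d}$ of the spectral gap, and these two scales must be kept separate and combined correctly to land on $L^{-2d}\ell^{3d}$ rather than on a power that grows; in particular one must check that the finite-rank trace-norm bound, which is naturally entrywise, genuinely controls the full $\ell^\infty\to\ell^1$ operator norm (this is where the relation $\sum_iP_i=I$ enters). A secondary point is that the distance bound holds only on the Wegner good event of probability $1-C_W(\ell/L)^d$, so this event must be carried through and shown compatible with the probabilistic estimate of Proposition \ref{prop:decorrelation-prop1}.
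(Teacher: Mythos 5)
Your proposal is correct and follows essentially the same route as the paper: starting from the contour representation \eqref{eq:hessian2}, using cyclicity of the trace to factor the integrand into the finite-rank blocks $P_iR(z)^2P_j$ and $P_jR(z)P_i$, bounding their trace norms via the rank $m_k$ and the resolvent bound, and then invoking the Wegner estimate to take ${\rm dist}(\gamma_E,\sigma(H_{\omega,\ell}))\sim\ell^{-d}$ on an event of probability at least $1-C_W(\ell/L)^d$, which yields \eqref{eq:hessian3}--\eqref{eq:hessian4} and the condition $0<\alpha<2/3$. The bookkeeping issues you flag --- the interplay of the scales $L^{-d}$ and $\ell^{-d}$, and the passage from entrywise trace-norm bounds to the $\ell^\infty\to\ell^1$ norm --- are precisely the points the paper itself leaves implicit, so your write-up matches the paper's own argument in both strategy and level of detail.
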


\section{Proof of Proposition \ref{prop:decorrelation-prop1}}\label{sec:proof-prop1}

In this section, we prove the technical result, Proposition \ref{prop:decorrelation-prop1}.
We let
$X_\ell(I_L(E)) := {\rm Tr} E_{H_{\omega,\ell}} (I_L(E))$,
$X_\ell(J_L(E^\prime)) := {\rm Tr} E_{H_{\omega,\ell}} (J_L(E^\prime))$.
Then, we show
\beq\label{eq:same-number1}
\P \{ ( X_\ell(I_L(E)) \geq 1 ) \cap ( X_\ell(J_L(E')) \geq 1) \} \leq C_0  \frac{1}{L^{2d(1 - 2 \beta - \alpha)}} ,
\eeq
for positive numbers $(\alpha, \beta)$ satisfying \eqref{eq:constraints1}.

%


\subsection{Reduction via the extended Minami estimate}\label{subsec:reduction1}

Let $\chi_A(\omega)$ be the characteristic function on the subset $A \subset \Omega$. In this section, we write $J_L(E) := L^{-d} J + E$ since we are dealing with one interval.
We use an extended Minami estimate of the form
\beq\label{eq:ext-minami1}
\E \{ \chi_{ \{ \omega ~|~  X_\ell(J_L(E)) \geq m_k +1 \} } X_\ell(J_L(E))( X_\ell(J_L(E)) - m_k ) \geq 1 \}  \nonumber \\
  \leq  C_M \left( \frac{\ell}{L} \right)^{2d} ,
\eeq
as follows from \cite{hislop-krishna1}.

\begin{lemma}\label{lemma:minami-consequence1}
Under the condition that the projectors have uniform dimension $m_k \geq 1$, we have
\beq\label{eq:ext-minami2}
\P \{ X_\ell(J_L(E)) > m_k \} \leq C_M \left( \frac{\ell}{L}\right)^{2d} .
\eeq
\end{lemma}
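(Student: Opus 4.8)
The plan is to extract the probability estimate \eqref{eq:ext-minami2} directly from the extended Minami estimate \eqref{eq:ext-minami1} by a Chebyshev/Markov-type argument. The key observation is that on the event $\{\omega \mid X_\ell(J_L(E)) > m_k\} = \{\omega \mid X_\ell(J_L(E)) \geq m_k + 1\}$, the integer-valued random variable $X_\ell(J_L(E))$ satisfies a pointwise lower bound on the product $X_\ell(J_L(E))(X_\ell(J_L(E)) - m_k)$ appearing in \eqref{eq:ext-minami1}. Indeed, writing $n := X_\ell(J_L(E))$, on this event we have $n \geq m_k + 1$, so $n \geq 1$ and $n - m_k \geq 1$, whence the product $n(n - m_k) \geq (m_k+1)\cdot 1 \geq 1$. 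This is precisely the mechanism that lets one dominate an indicator by the expectation of a nonnegative quantity.

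The steps I would carry out are as follows. First, I would observe that
\[
\P \{ X_\ell(J_L(E)) > m_k \} = \E \{ \chi_{\{\omega \mid X_\ell(J_L(E)) \geq m_k + 1\}} \},
\]
since the probability of an event is the expectation of its indicator. Second, on the set $\{\omega \mid X_\ell(J_L(E)) \geq m_k + 1\}$ I would invoke the pointwise bound just described,
\[
1 \leq X_\ell(J_L(E)) \, ( X_\ell(J_L(E)) - m_k ),
\]
which holds because $X_\ell(J_L(E))$ takes integer values and is at least $m_k + 1$ there. Multiplying the indicator by this quantity (which is $\geq 1$ on the event and nonnegative everywhere) only increases the integrand, so
\[
\E \{ \chi_{\{\omega \mid X_\ell(J_L(E)) \geq m_k + 1\}} \} \leq \E \{ \chi_{\{\omega \mid X_\ell(J_L(E)) \geq m_k + 1\}} \, X_\ell(J_L(E)) \, ( X_\ell(J_L(E)) - m_k ) \}.
\]
Third, I would apply the extended Minami estimate \eqref{eq:ext-minami1}, which bounds exactly this last expectation by $C_M (\ell / L)^{2d}$. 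Chaining these three relations yields \eqref{eq:ext-minami2} with the same constant $C_M$.

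I do not expect a genuine obstacle here, as the statement is essentially a restatement of \eqref{eq:ext-minami1} in probabilistic form; the only point requiring mild care is the verification of the elementary inequality $n(n - m_k) \geq 1$ for integers $n \geq m_k + 1$, which relies crucially on the integrality of the eigenvalue count. The remaining manipulation is monotonicity of expectation applied to nonnegative integrands. If anything, I would double-check that the form of \eqref{eq:ext-minami1} I am quoting is the one where the expectation is taken of the product $\chi \cdot X_\ell (X_\ell - m_k)$ rather than a slightly different grouping, so that the dominating step lines up exactly with the hypothesis imported from \cite{hislop-krishna1}.
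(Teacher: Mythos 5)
Your proposal is correct and follows essentially the same route as the paper: both arguments exploit the integrality of $X_\ell(J_L(E))$ to dominate the event $\{X_\ell(J_L(E)) \geq m_k+1\}$ by the nonnegative quantity $\chi_{\{X_\ell \geq m_k+1\}}\, X_\ell\,(X_\ell - m_k)$ and then invoke the extended Minami estimate \eqref{eq:ext-minami1}. The only cosmetic difference is that you bound the indicator pointwise and take expectations directly, whereas the paper passes through a chain of equalities of probabilities followed by a Markov-type inequality; the content is identical.
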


\begin{proof}
Recalling that $X_\ell(J_L(E)) \in \{ 0 \} \cup \N$, we have
\bea\label{eq:minami-conseq1}
\lefteqn{ \P \{ X_\ell(J_L(E)) > m_k \} } \nonumber \\
 & \leq & \P \{ X_\ell(J_L(E)) - m_k \geq 1 \} \nonumber \\
 & = &  \P \{ X_\ell(J_L(E)) ( X_\ell(J_L(E)) - m_k ) \geq 1 \}  \nonumber \\
 & = &  \P \{ \chi_{ \{ \omega ~|~  X_\ell(J_L(E)) \geq m_k +1 \} } X_\ell(J_L(E))( X_\ell(J_L(E)) - m_k ) \geq 1 \}  \nonumber \\
 & \leq & \E \{ \chi_{ \{ \omega ~|~  X_\ell(J_L(E)) \geq m_k +1 \} } X_\ell(J_L(E))( X_\ell(J_L(E)) - m_k ) \geq 1 \}  \nonumber \\
 & \leq & C_M \left( \frac{\ell}{L} \right)^{2d} ,
 \eea
by the extended Minami estimate \cite{hislop-krishna1}.
\end{proof}

%

\subsection{Estimates on the joint probability}\label{subsec:joint-prob1}

We return to considering two scaled intervals $I_L(E)$ and $J_L(E^\prime)$, with $E \neq E^\prime$.
Because of \eqref{eq:ext-minami1}, we have
\bea\label{eq:same-number2}
\lefteqn{ \P \{ ( X_\ell(I_L(E)) \geq 1 ) \cap ( X_\ell(J_L(E')) \geq 1) \} } \nonumber \\
 & \leq & \P \{ ( X_\ell(I_L(E)) \geq m_k +1 ) \cap ( X_\ell(J_L(E^\prime)) \geq m_k +1 ) \} \nonumber \\
 & & +  \P \{ ( X_\ell(I_L(E)) \leq m_k ) \cap ( X_\ell(J_L(E^\prime)) \geq m_k +1 ) \}
 \nonumber \\
 & & +  \P \{ ( X_\ell(I_L(E)) \leq m_k + 1 ) \cap ( X_\ell(J_L(E^\prime)) \geq m_k ) \}
 \nonumber \\
 & & + \P \{ ( X_\ell(I_L(E)) \leq m_k  ) \cap ( X_\ell(J_L(E^\prime)) \leq m_k ) \}
 \nonumber \\
 & \leq & \P \{ ( X_\ell(I_L(E)) \leq m_k ) \cap ( X_\ell(J_L(E^\prime)) \leq m_k  ) \} \nonumber \\
 & &
 + C_0 \left( \frac{\ell}{L} \right)^{2d} .
\eea
The probability on the last line of \eqref{eq:same-number2} may be bounded above by
\bea\label{eq:reduction1}
\lefteqn{ \P \{ ( X_\ell(I_L(E)) \leq m_k ) \cap ( X_\ell(J_L(E^\prime)) \leq m_k  ) \} } & & \nonumber \\
 & \leq &
\sum_{k_1,k_2}^{m_k} \P \{ ( X_\ell(I_L(E))  = k_1 ) \cap ( X_\ell(J_L(E^\prime)) = k_2  ) \}.
\eea
Since $m_k$ is independent of $L$, it suffices to estimate
\beq\label{eq:same-number3}
\P \{ ( X_\ell(I_L(E))  = k_1 ) \cap ( X_\ell(J_L(E^\prime)) = k_2  ) \}.
\eeq

The proof of the next key Proposition \ref{prop:vanishing-prob1} follows the ideas in \cite{klopp}.

\begin{pr}\label{prop:vanishing-prob1}
For $k_1, k_2 = 1, \ldots, m_k$ and positive numbers $(\alpha, \beta)$ satisfying \eqref{eq:constraints1}, we have
\beq\label{eq:vanishing-prob1}
\P \{ ( X_\ell(I_L(E))  = k_1 ) \cap ( X_\ell(J_L(E^\prime)) = k_2  ) \} \leq
C \left( \frac{K}{\Delta E - 4d} \right)^4 L^{-2d (1 - 2 \beta - \alpha)}.
\eeq
\end{pr}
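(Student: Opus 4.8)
The plan is to follow Klopp's strategy, treating the effective eigenvalues $\mathcal{T}(\omega)$ and $\mathcal{T}'(\omega)$ as surrogates for the two eigenvalue clusters and exploiting that their $\omega$-gradients are probability vectors that point in quantitatively different directions. First I would reduce the event in \eqref{eq:vanishing-prob1} to a statement about $\mathcal{T},\mathcal{T}'$. On the event $\{X_\ell(I_L(E))=k_1\}$ all $k_1$ eigenvalues lie in $I_L(E)$, so by property (1) of Section~\ref{sec:sum-estimates1} one has $\mathcal{T}(\omega)\in I_L(E)$, and likewise $\mathcal{T}'(\omega)\in J_L(E')$. Since $|I_L(E)|,|J_L(E')|=\mathcal{O}(L^{-d})$, it suffices to bound $\P\{\mathcal{T}\in I_L(E),\ \mathcal{T}'\in J_L(E')\}$, the probability that two smooth functions of $\omega$ simultaneously fall into two intervals of length $\sim L^{-d}$ whose centers are separated by more than $4d-cL^{-d}$ by Remark~\ref{remark:t-separation}.

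Next I would restrict to the Wegner-good event on which $\dist(\gamma_E,\sigma(H_{\omega,\ell}))\sim \ell^{-d}$, whose complement costs only $C_W(\ell/L)^d$ as in Lemma~\ref{lemma:hessian1}. On this event the gradient representation \eqref{eq:tr-deriv1} is valid and, crucially, the Hessian is negligible, $\|\mathrm{Hess}(\mathcal{T})\|_{\ell^\infty\to\ell^1}\lesssim L^{3d\alpha-2d}$ by \eqref{eq:hessian4} (here one uses $\alpha<2/3$); the same holds for $\mathcal{T}'$. Thus $\mathcal{T}$ and $\mathcal{T}'$ are effectively affine over the range of $\omega$ relevant to the change of variables carried out below, which is what lets one pass from the infinitesimal gradient/Jacobian information to a genuine measure estimate.

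The heart of the matter is a change of variables with a Jacobian lower bound. Both gradients $\nabla\mathcal{T},\nabla\mathcal{T}'$ are probability vectors (nonnegative entries summing to $1$, by properties (2)--(4)) on the $N\sim(2\ell+1)^d$ sites of $\Lambda_\ell$, so both have the same projection onto the all-ones direction $\mathbf 1$ while their difference is orthogonal to $\mathbf 1$ and obeys $\|\nabla(\mathcal{T}-\mathcal{T}')\|_2\ge \delta\,(2\ell+1)^{-d/2}$ and $\|\nabla(\mathcal{T}-\mathcal{T}')\|_1\ge \delta$, where $\delta:=(\Delta E-4d)/K$, by \eqref{eq:variation4} and \eqref{eq:variation5}. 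A wedge-product computation in the plane spanned by the two gradients then gives a lower bound $\sqrt{\det G}\gtrsim \delta/N$ for the area element of $\omega\mapsto(\mathcal{T},\mathcal{T}')$, where $G$ is the associated $2\times 2$ Gram matrix; by Cauchy--Binet and pigeonholing over the $\binom{N}{2}$ minors of the $2\times N$ matrix with rows $\nabla\mathcal{T},\nabla\mathcal{T}'$, there is a coordinate pair $(s,t)$ whose $2\times 2$ Jacobian $\det\!\big(\tfrac{\partial(\mathcal{T},\mathcal{T}')}{\partial(\omega_s,\omega_t)}\big)$ has modulus $\gtrsim \delta\,\ell^{-2d}$. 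Conditioning on the remaining variables and changing variables from $(\omega_s,\omega_t)$ to $(\mathcal{T},\mathcal{T}')$ against the bounded product density, using the near-affinity of Step~2 to guarantee injectivity over the preimage, bounds the conditional probability by $\|\rho\|_\infty^2\,|I_L(E)|\,|J_L(E')|/|\det|\lesssim \delta^{-1}\,\ell^{2d}L^{-2d}=\delta^{-1}L^{-2d(1-\alpha)}$, which is the ideal count.

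The step I expect to be the main obstacle is precisely the passage from this pointwise ``there exists a good pair'' statement to the joint probability bound: the good pair $(s,t)$ depends on $\omega$ (indeed on $\omega_s,\omega_t$ themselves), so one cannot simply fix it and condition, and a naive union bound over all $\sim\ell^{2d}$ pairs would reintroduce a fatal $L^{2\alpha d}$ factor. Avoiding this while simultaneously controlling the nonlinearity of $(\mathcal{T},\mathcal{T}')$ over the full range of the change of variables -- so that the determinant stays comparable to its value and the map remains essentially injective -- is where the scale-separation parameter $\beta$ must be introduced, producing the extra factor $L^{4d\beta}$ and, through the repeated use of the variational lower bounds \eqref{eq:variation4}--\eqref{eq:variation5}, the constant $\big(K/(\Delta E-4d)\big)^4$. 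The exponents $(\alpha,\beta)$ are then tuned subject to \eqref{eq:constraints1} (together with $\alpha<2/3$ from the Hessian bound) so that $2d(1-2\beta-\alpha)>0$, which yields \eqref{eq:vanishing-prob1}.
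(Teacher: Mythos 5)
Your reduction of the counting event to $\{\mathcal{T}\in I_L(E)\}\cap\{\mathcal{T}'\in J_L(E')\}$ via property (1) and Remark \ref{remark:t-separation}, the restriction to the Wegner-good set with Hessian control \eqref{eq:hessian4}, and your Jacobian lower bound are all sound, and they follow the same skeleton as the paper. In particular, your wedge-product/Cauchy--Binet/pigeonhole argument is a correct substitute for the paper's citation of Klopp's Lemma 2.5 \eqref{eq:jac-lb1}: writing $\delta:=(\Delta E-4d)/K$, it even yields $\max_{s\neq t}|J_{st}|\gtrsim \delta\,\ell^{-2d}$ in place of the paper's $\delta\,\ell^{-5d/2}$ from \eqref{eq:jac-lb2}, which would relax the constraint $\alpha<\tfrac{2}{5}\beta$ in \eqref{eq:constraints1} to $\alpha<\tfrac{1}{2}\beta$.

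The genuine gap is that you stop exactly at the step which constitutes the actual content of the paper's proof, and your diagnosis of that step is wrong: the passage from ``some pair has a large Jacobian'' to the probability bound is resolved in the paper precisely by the union bound over pairs that you dismiss as ``fatal.'' Concretely, one fixes the threshold $\lambda(L):=\delta L^{-\beta d}$, which under $\alpha<\tfrac{2}{5}\beta$ lies below the guaranteed maximum; hence the event that \emph{all} pairs satisfy $J_{ij}<\lambda(L)$ is empty, since by \eqref{eq:jac-lb3}--\eqref{eq:gradients1} it would force $\|\nabla_\omega(\mathcal{T}-\mathcal{T}')\|_1\le C L^{-d(\beta-5\alpha/2)}\to 0$, contradicting the uniform lower bound \eqref{eq:variation5}. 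Therefore $\Omega_0(\ell,k_1,k_2)=\bigcup_{(i,j)}\Omega_0^{i,j}(\ell,k_1,k_2)$, where each $\Omega_0^{i,j}$ carries a \emph{pair-specific} Jacobian lower bound; this is how the $\omega$-dependence of the good pair is eliminated. For each fixed pair, Lemma \ref{lemma:diffeom1} (Taylor expansion plus the Hessian bound plus the inverse function theorem) shows that, for fixed $\omega_{ij}^\perp$, the fiber of $\Omega_0^{i,j}$ in the $(\omega_i,\omega_j)$-plane has diameter at most $\sim L^{-d}\lambda^{-2}(L)$, whence $\P\{\Omega_0^{i,j}\}\le C L^{-2d}\lambda^{-4}(L)$ as in \eqref{eq:independent-est1}; it is here --- not through ``repeated use'' of \eqref{eq:variation4}--\eqref{eq:variation5} --- that the factors $L^{4\beta d}$ and $\bigl(K/(\Delta E-4d)\bigr)^4$ arise. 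Summing over the $\le \ell^{2d}$ pairs gives $\ell^{2d}L^{-2d}\lambda^{-4}(L)=C\bigl(K/(\Delta E-4d)\bigr)^4 L^{-2d(1-2\beta-\alpha)}$: the union-bound factor $L^{2\alpha d}$ is not fatal, it is exactly the ``$\alpha$'' budgeted for in the exponent of \eqref{eq:vanishing-prob1}. Since your proposal explicitly leaves this passage unresolved --- and your per-pair ``ideal count,'' with its $\delta^{-1}$ and injectivity-by-near-affinity over an unspecified region, is not a substitute for the threshold-plus-stability argument --- the proof is incomplete at its central step.
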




\begin{proof}
1. We begin with some observation concerning the eigenvalue averages.
We let $\Omega_0(\ell, k_1, k_2)$ denote the event
\beq\label{eq:joint-prob1}
\Omega_0(\ell, k_1, k_2) := \{ \omega ~|~  (X_\ell(I_L(E)) = k_1 ) \cap ( X_\ell(J_L(E')) = k_2) \} \cap \Omega_{W, \ell},
\eeq
for $k_1, k_2 = 1, \ldots, m_k$. The set $\Omega_{W, \ell}$ is the set of
$\omega$ for which the eigenvalue spacing for $H_{\omega, \ell}$ in the interval $I_L(E)$ or $J_L(E^\prime)$ is $\mathcal{O}(\ell^{-d})$. By the Wegner estimate, the probability of this set is at least $1 - C_W ( \ell / L)^{-d}$, as discussed in Lemma \ref{lemma:hessian1}. We define the subset $\Delta \subset \Lambda_\ell \times \Lambda_\ell$ by $\Delta := \{ (i,i) ~|~ i \in \Lambda_\ell \}$. For each pair of sites $(i,j) \in \Lambda_\ell \times \Lambda_\ell \backslash \Delta$, the Jacobian determinant of the mapping $\varphi: (\omega_i, \omega_j) \rightarrow
( \mathcal{T}_\ell (E, k_1), \mathcal{T}_\ell (E^\prime, k_2))$, given by:
\beq\label{eq:jacobian1}
     J_{ij}( \mathcal{T}_\ell (E,k_1), \mathcal{T}_\ell(E^\prime, k_2) ) :=     \left|    \begin{array}{cc}
                   \partial_{\omega_i}\mathcal{T}_\ell (E, k_1) & \partial_{\omega_j} \mathcal{T}_\ell (E, k_1) \\
                   \partial_{\omega_i} \mathcal{T}_\ell (E^\prime, k_2) & \partial_{\omega_j} \mathcal{T}_\ell (E^\prime, k_2)
                    \end{array}   \right| .
\eeq
As we will show in section \ref{subsec:diffeom1}, the condition $J_{ij}( \mathcal{T}_\ell (E,k_1), \mathcal{T}_\ell(E^\prime, k_2) ) \geq \lambda (L) > 0$ implies that the average of the eigenvalues in $I_L(E)$ and $J_L(E^\prime)$ effectively vary independently with respect to any pair of independent random variables $(\omega_i, \omega_j)$, for $i \neq j$. We define the following events for pairs $(i,j) in \Lambda_\ell \times \Lambda_\ell \backslash \Delta$:
\beq\label{eq:pair-event1}
\Omega_0^{i,j} (\ell, k_1, k_2) := \Omega_0(\ell, k_1, k_2) \cap \{ \omega ~|~ J_{ij}( \mathcal{T}_\ell (E,k_1), \mathcal{T}_\ell(E^\prime, k_2) ) \geq \lambda (L) \},
\eeq
where $\lambda (L) > 0$ is given by
\beq\label{eq:defn-lambda1}
\lambda(L) := ( \Delta E - 4d) K^{-1} L^{- \beta d} ,
\eeq
where the exponent $\beta > 0$ satisfies
\beq\label{eq:constraints1}
0 < \beta < \frac{1}{2}, ~~0 < \alpha + 4 \beta < 1, ~~ 0 < \alpha < \frac{2}{5} \beta .
\eeq
For example, we may take $\beta = \frac{1}{8}$ and $\alpha < \frac{1}{20}$.

\noindent
2. We next compute $\P \{ \Omega_0^{i,j} (\ell, k_1, k_2) \}$. Following Klopp \cite[pg.\ 242]{klopp}, we prove in section \ref{subsec:diffeom1} that the positivity of the Jacobian determinant insures that the map $\varphi$, restricted to a certain domain, is a diffeomorphism. In particular, for any pair
$(i,j) \in \Lambda_\ell \times \Lambda_\ell \backslash \Delta$, if $(\omega_i^0 , \omega_j^0 , \omega_{ij}^\perp) \in \Omega_0^{i,j}(\ell,k_1,k_2)$, then it follows from Lemma \ref{lemma:diffeom1} if $\| (\omega_i^0, \omega_j^0) - (\omega_i, \omega_j) \| > L^{-d} \lambda^{-2}(L)$, consequently
one has $(\mathcal{T}_\ell(E, k_1, \omega), \mathcal{T}_\ell (E^\prime, k_2, \omega)) \not \in {I}_L(E) \times {J}_L(E^\prime)$.
This would contradict the fact that $\omega \in \Omega_0 (\ell, k_1, k_2)$.
This is key in the following computation:
\bea\label{eq:independent-est1}
& & \P \{ \Omega_0^{i,j} (\ell, k_1, k_2) \}\nonumber \\ & = & \E_{\omega_{ij}^\perp} \left\{ \int_{\R^2} \chi_{\Omega_0^{i,j} (\ell, k_1, k_2)} ( \omega_i, \omega_j, \omega_{ij}^\perp ) g(\omega_i) g(\omega_j) ~d\omega_i ~d \omega_j \right\} \nonumber \\
 & \leq & \E_{\omega_{ij}^\perp} \left\{ \int_{\R^2} \chi_{ \{ \| (\omega_i, \omega_j) - ( \omega_i^0, \omega_j^0) \|_\infty
 \leq L^{-d} \lambda^{-2} \} } ( \omega_i, \omega_j, \omega_{ij}^\perp ) g(\omega_i) g(\omega_j) ~d\omega_i ~d \omega_j \right\} \nonumber \\
 & \leq & C L^{-2d} \lambda^{-4}(L) .
 \eea

\noindent
3. We next bound $\P \{ \Omega_0 (\ell, k_1, k_2) \}$ in terms of $\P \{ \Omega_0^{i,j} (\ell, k_1, k_2) \}$ using
\cite[Lemma 2.5]{klopp}. This lemma states that for $(u,v ) \in (\R^+)^{2n}$ normalized so that $\| u \|_1 = \| v \|_1 = 1$, we have
\beq\label{eq:jac-lb1}
\max_{j \neq k} \left| \begin{array}{cc}
                            u_j & u_k \\
                            v_j & v_k
                            \end{array} \right|^2 \geq \frac{1}{4 n^5} \| u-v \|_1^2.
\eeq
Applying this with $n = (2 \ell + 1)^d$, and  $u = \nabla_\omega \mathcal{T}(\omega)$ and $v = \nabla_{\omega} \mathcal{T}^\prime (\omega)$,
and recalling the positivity \eqref{eq:deg-ev3} in point (3) and the normalization in point (4) of section \ref{subsec:weight-trace1}, we obtain from \eqref{eq:jac-lb1} and \eqref{eq:variation5}:
\bea\label{eq:jac-lb2}
\max_{i \neq j \in \Lambda_\ell} J_{ij} ( \mathcal{T}_\ell (E), \mathcal{T}_\ell (E^\prime))^2  &\geq & \left( \frac{2^3}{ \ell^{5d}} \right)
  \| \nabla_\omega ( \mathcal{T}_\ell (E) - \mathcal{T}_\ell (E^\prime)) \|_1^2 \nonumber \\
 & \geq &
\left( \frac{\Delta E - 4d}{K} \right)^2 \left( \frac{2^3}{ \ell^{5d}} \right) .
\eea
We partition the probability space as $\{ \omega ~|~ J_{ij} \geq \lambda(L) ~{\rm some} ~(i,j) \in \Lambda_\ell \times \Lambda_\ell \backslash \Delta \} \cup \{ \omega ~|~ J_{ij} < \lambda(L) ~\forall ~(i,j) \in \Lambda_\ell \times \Lambda_\ell \backslash \Delta \}$, where we write $J_{ij}$ for the Jacobian $J_{ij} ( \mathcal{T}_\ell (E), \mathcal{T}_\ell (E^\prime))$.
Suppose that the second event $\{ \omega ~|~ J_{ij} < \lambda(L) ~\forall ~(i,j) \in \Lambda_\ell \times \Lambda_\ell \backslash \Delta \}$ occurs, so that from \eqref{eq:jac-lb2}, we have:
\bea\label{eq:jac-lb3}
\lefteqn{ \lambda(L)^2 = \left( \frac{C_0}{L^{\beta d}} \right)^2 \geq  \max_{i \neq j \in \Lambda_\ell} J_{ij} ( \mathcal{T}_\ell (E), \mathcal{T}_\ell (E^\prime))^2 } \nonumber \\
  &\geq & \left( \frac{2^3}{ \ell^{5d}} \right)
  \| \nabla_\omega ( \mathcal{T}_\ell (E) - \mathcal{T}_\ell (E^\prime)) \|_1^2 .
\eea
This implies that
\beq\label{eq:gradients1}
    \| \nabla_\omega ( \mathcal{T}_\ell (E) - \mathcal{T}_\ell (E^\prime)) \|_1 \leq C_1 L^{-d(\beta - 5 \alpha / 2 )} .
\eeq
So, provided $0 < \alpha < \frac{2}{5} \beta$, we find that the bound \eqref{eq:gradients1} implies that the $\nabla_\omega \mathcal{T}_\ell (E)$ is almost collinear with $\nabla_\omega \mathcal{T}_\ell (E^\prime)$. This contradicts the lower bound \eqref{eq:variation5} as long as $\Delta E > 0$. Consequently, the probability of the second event is zero.

\noindent
4. It follows from this and the partition of the probability space that
\bea\label{eq:independent-est2}
 \P \{ \Omega_0 (\ell, k_1, k_2) \} & \leq & \sum_{(i,j) \in \Lambda_\ell \times \Lambda_\ell \backslash \Delta} \P \{ \Omega_0^{i,j} (\ell, k_1, k_2) \} \nonumber \\
  & \leq & \ell^{2d} \lambda^{-4}(L) L^{-2d}.
  \eea
We now take $\ell = L^\alpha$ and $\lambda(L) := ( \Delta E - 4d) K^{-1} L^{- \beta d}$, with $(\alpha, \beta)$ satisfying \eqref{eq:constraints1}. With these choices, and the fact that $m_k$ is independent of $L$,
 we obtain the probability
 \beq\label{eq:independent-est3}
  \P \{ \Omega_0 (\ell, k_1, k_2) \} \leq C \left( \frac{K}{\Delta E - 4d} \right)^4 L^{-2d (1 - 2 \beta - \alpha)}.
  \eeq
For choices $\alpha$ and $\beta$ with $0 < \alpha + 2 \beta < 1$, this shows that
$$
\P \{ \Omega_0 (\ell, k_1, k_2) \} ~~{\rm and} ~~ \P \{ (X_\ell(I_L(E)) = k_1 ) \cap ( X_\ell(J_L(E^\prime)) = k_2) \} \rightarrow 0, {\rm as} ~~L \rightarrow 0,
$$
for any integers $k_1, k_2 = 1, \ldots, m_k$.
This proves, up to the proof of the diffeomorphism property of $\varphi$, the main result \eqref{eq:prob1}.
\end{proof}

\subsection{Proof of the diffeomorphism property}\label{subsec:diffeom1}

We prove the following lemma on the perturbation of a set of good configurations $(\omega_i^0, \omega_j^0)$.
Let $\Omega_0(\ell, k_1, k_2), k_1, k_2 = 1, \ldots, m_k$ be the set of configurations described in \eqref{eq:joint-prob1}. Similarly, for any pair of sites $(i,j) \in \Lambda_\ell
\times \Lambda_\ell \backslash \Delta$, the Jacobian determinant
$J_{ij}( \mathcal{T}_\ell (E,k_1), \mathcal{T}_\ell(E^\prime, k_2) )$ is defined in equation (\ref{eq:jacobian1}).
We also defined events $\Omega_0^{i,j} (\ell, k_1, k_2)$, for pairs $(i,j) \in \Lambda_\ell \times \Lambda_\ell \backslash \Delta$, in \eqref{eq:pair-event1}:
\beq\label{eq:pair-event2}
\Omega_0^{i,j} (\ell, k_1, k_2) := \Omega_0(\ell, k_1, k_2) \cap \{ \omega ~|~ J_{ij}( \mathcal{T}_\ell (E,k_1), \mathcal{T}_\ell(E^\prime, k_2) ) \geq \lambda(L) \},
\eeq
where $\lambda (L) > 0$ has the value
\beq\label{eq:lambda1}
\lambda (L) := \frac{\Delta E - 4d}{K} L^{- d \beta} ,
\eeq
where $\alpha$ and $\beta$ satisfy the constraints in \eqref{eq:constraints1}.
The stability estimate following from the diffeomorphism property of $\varphi$ is given in the following lemma.

\begin{lemma}\cite[Lemma 2.6]{klopp}\label{lemma:diffeom1} 
Suppose that $(\omega_i^0, \omega_j^0, \omega_{ij}^\perp) \in \Omega_{0}^{i,j}(\ell, k_1, k_2)$ and $(\alpha, \beta)$
satisfy \eqref{eq:constraints1}.
Then for any pair $(\omega_i, \omega_j) \in \R^2$ with
$$
\| (\omega_i^0, \omega_j^0) - (\omega_i, \omega_j) \| > L^{-d} \lambda^{-2}(L),
$$
one has
\beq\label{eq:config-pert1-1}
(\mathcal{T}_\ell(E, k_1, \omega), \mathcal{T}_\ell (E^\prime, k_2, \omega)) \not \in {I}_L(E) \times {J}_L(E^\prime),
\eeq
where $\lambda (L)$ has the value given in \eqref{eq:lambda1}.
\end{lemma}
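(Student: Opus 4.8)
The plan is to prove the lemma by a quantitative inverse function theorem argument for the map $\varphi\colon(\omega_i,\omega_j)\mapsto(\mathcal{T}_\ell(E,k_1,\omega),\mathcal{T}_\ell(E',k_2,\omega))$, with the remaining variables $\omega_{ij}^\perp$ held fixed. Since the target rectangle $I_L(E)\times J_L(E')$ has diameter of order $L^{-d}$ and, by hypothesis, $\varphi(\omega_i^0,\omega_j^0)\in I_L(E)\times J_L(E')$, it suffices to show that $\|\varphi(\omega_i,\omega_j)-\varphi(\omega_i^0,\omega_j^0)\|$ exceeds the diameter of the rectangle whenever $\|(\omega_i,\omega_j)-(\omega_i^0,\omega_j^0)\|>L^{-d}\lambda^{-2}(L)$; any such point then cannot lie in the box. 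Thus the whole matter reduces to a lower bound on how far $\varphi$ displaces a point, expressed through the Jacobian lower bound and the size of the second derivatives.

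Two ingredients drive this bound. First, at the base point the differential $D\varphi(\omega_i^0,\omega_j^0)$ is the $2\times2$ matrix whose rows are $(\partial_{\omega_i}\mathcal{T},\partial_{\omega_j}\mathcal{T})$ and $(\partial_{\omega_i}\mathcal{T}',\partial_{\omega_j}\mathcal{T}')$; its determinant is exactly $J_{ij}\geq\lambda(L)$, while properties (3)--(4) of Section \ref{subsec:weight-trace1} force every entry to lie in $[0,1]$, so $\|D\varphi\|\leq 2$. Hence the smallest singular value satisfies $\sigma_{\min}(D\varphi)=|\det D\varphi|/\sigma_{\max}(D\varphi)\geq\lambda(L)/2$, that is, $D\varphi$ expands every direction by at least $\lambda(L)/2$. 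Second, Lemma \ref{lemma:hessian1} gives, on the Wegner set $\Omega_{W,\ell}$, the Hessian bound $\|{\rm Hess}(\mathcal{T})\|,\|{\rm Hess}(\mathcal{T}')\|\lesssim L^{3d\alpha-2d}$, so the relevant second derivatives $\partial_{\omega_a}\partial_{\omega_b}\mathcal{T}$, $a,b\in\{i,j\}$, are tiny and $D\varphi$ is Lipschitz with a very small constant.

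I would then combine these by a first-order Taylor expansion along the segment joining the two points: writing $\Delta=(\omega_i,\omega_j)-(\omega_i^0,\omega_j^0)$ one has $\varphi(\omega_i,\omega_j)-\varphi(\omega_i^0,\omega_j^0)=D\varphi(\omega_i^0,\omega_j^0)\Delta+R$ with $\|R\|\leq\sup\|{\rm Hess}\|\,\|\Delta\|^2$. The main term has norm at least $(\lambda(L)/2)\|\Delta\|$, so for $\|\Delta\|>L^{-d}\lambda^{-2}(L)$ it exceeds $\tfrac12 L^{-d}\lambda^{-1}(L)\gg L^{-d}$ because $\lambda(L)<1$ for large $L$. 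The remainder is negligible precisely when $\|{\rm Hess}\|\,L^{-2d}\lambda^{-4}(L)\ll L^{-d}\lambda^{-1}(L)$, i.e.\ $\|{\rm Hess}\|\lesssim L^{d}\lambda^{3}(L)$; inserting $\|{\rm Hess}\|\sim L^{3d\alpha-2d}$ and $\lambda(L)\sim L^{-\beta d}$ this is the condition $\alpha+\beta\leq 1$, which is guaranteed by $0<\alpha+4\beta<1$ in \eqref{eq:constraints1}. Therefore $\|\varphi(\omega_i,\omega_j)-\varphi(\omega_i^0,\omega_j^0)\|>L^{-d}\gtrsim\mathrm{diam}(I_L(E)\times J_L(E'))$, forcing $\varphi(\omega_i,\omega_j)\notin I_L(E)\times J_L(E')$ as claimed.

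The point needing most care is ruling out that $\varphi$ folds back and re-enters the rectangle at points far from $(\omega_i^0,\omega_j^0)$, i.e.\ global injectivity of $\varphi$ on the relevant region rather than merely the local expansion estimate above. Here the smallness of the Hessian is decisive: the displacement threshold $L^{-d}\lambda^{-2}(L)=L^{d(2\beta-1)}$ is tiny, as $\beta<\tfrac12$, whereas the scale over which $D\varphi$ stays within, say, $\tfrac12\sigma_{\min}$ of its value at the base point is of order $\lambda(L)/\|{\rm Hess}\|\sim L^{d(2-\beta-3\alpha)}$, which is large. Consequently $D\varphi$ is essentially constant and uniformly nonsingular throughout a neighborhood vastly larger than the threshold, so $\varphi$ is a diffeomorphism there and the preimage of the convex rectangle is connected and contained in the ball of radius $L^{-d}\lambda^{-2}(L)$. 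This is exactly the mechanism of Klopp's Lemma 2.6, and the simultaneous validity of all these scale separations is precisely what the three exponent constraints in \eqref{eq:constraints1} encode.
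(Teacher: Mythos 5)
There is a genuine gap, and it sits exactly where you placed your ``point needing most care.'' Your local estimate at the threshold distance is fine and essentially reproduces the paper's injectivity step: the singular-value bound $\sigma_{\min}(D\varphi)\geq\lambda(L)/2$ (from $\det D\varphi\geq\lambda(L)$ and entries in $[0,1]$) together with the Hessian bound \eqref{eq:hessian4} gives the expansion estimate on a ball of radius $L^{-d}\lambda^{-2}(L)$. But your mechanism for ruling out ``folding back'' --- that $D\varphi$ stays within $\tfrac12\sigma_{\min}$ of its base value over scales of order $\lambda(L)/\|{\rm Hess}\|\sim L^{d(2-\beta-3\alpha)}$, hence over the whole (bounded) parameter domain --- presumes that the Hessian bound of Lemma \ref{lemma:hessian1} holds uniformly over such scales. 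It does not, and cannot. That bound requires ${\rm dist}(\gamma_E,\sigma(H_{\omega,\ell}))\sim\ell^{-d}$ and a fixed group of $k_1$ (resp.\ $k_2$) eigenvalues inside the contour; both are stable only under perturbations of $(\omega_i,\omega_j)$ much smaller than $\ell^{-d}=L^{-\alpha d}$ (note $\|\partial H/\partial\omega_s\|\le 1$, so eigenvalues move at unit speed). Once $(\omega_i,\omega_j)$ moves a distance of order one, eigenvalues cross $\gamma_E$ and enter or leave $I_L(E)$ and $J_L(E^\prime)$: the quantity $\mathcal{T}_\ell(E,k_1,\omega)$ is then not even continuous, let alone $C^2$ with a small Hessian, so the segment-integral/Taylor argument and the ``connected preimage'' argument both collapse outside a small neighborhood. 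The constraint bookkeeping ($\alpha+\beta\le 1$, etc.) is not the issue; the global smoothness is.

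The paper avoids this by never using smoothness beyond the square $\mathcal{S}$ of radius exactly $L^{-d}\lambda^{-2}(L)=CL^{-d(1-2\beta)}$, which is $\ll L^{-\alpha d}$ by \eqref{eq:constraints1}, so the Wegner-event spectral structure (and hence \eqref{eq:hessian4}) persists throughout $\mathcal{S}$. On $\mathcal{S}$ it proves $\varphi$ is an analytic diffeomorphism with $|{\rm Jac}\,\varphi^{-1}|\leq CL^{d\beta}$ \eqref{eq:inv-jac1}. The far-away points are then handled by a device absent from your proposal: the monotonicity of $\mathcal{T}$ and $\mathcal{T}^\prime$ in each variable $\omega_s$ (property (3) of section \ref{subsec:weight-trace1}), which reduces the claim for $\|(\omega_i,\omega_j)-(\omega_i^0,\omega_j^0)\|>L^{-d}\lambda^{-2}(L)$ to the claim for points on the boundary $\|\cdot\|_\infty=L^{-d}\lambda^{-2}(L)$; for those, the inverse-Lipschitz bound gives the contradiction $L^{-d(1-2\beta)}\leq CL^{-d}\cdot L^{\beta d}=CL^{-d(1-\beta)}$ of \eqref{eq:contradict1}. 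To repair your proof, replace the global diffeomorphism claim with this monotonicity reduction (or some other argument that never evaluates $\varphi$ outside a neighborhood of size $o(L^{-\alpha d})$).
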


\begin{proof}
1. 
Let us fix $\omega_{ij}^\perp$ so that $(\omega_i^0, \omega_j^0, \omega_{ij}^\perp) \in \Omega_{0}^{i,j}(\ell, k_1, k_2)$.
We consider the square $\mathcal{S}$ in two-dimensional configuration space:
\beq\label{eq:s-defn1}
\mathcal{S} := \{ (\omega_i, \omega_j) ~|~ \| (\omega_i^0, \omega_j^0) - (\omega_i, \omega_j) \| \leq L^{-d} \lambda (L)^{-2} \}
\eeq
and the map $\varphi : \mathcal{S} \rightarrow \R^2$ defined by
$$
\varphi (\omega_i, \omega_j) = ( \mathcal{T}_\ell (E, k_1, \omega), \mathcal{T}_\ell (E^\prime, k_2,\omega)) .
$$
The first goal is to prove that $\varphi$ is an invertible map between $\mathcal{S}$ and its range $\varphi (\mathcal{S})$.

\noindent
2. To prove that $\varphi$ is injective, we
suppose $(\omega_i, \omega_j)$ and $(\omega_i^\prime, \omega_j^\prime)$ both belong to $\mathcal{S}$ and that
$\varphi ( \omega_i, \omega_j) = \varphi (\omega_i^\prime, \omega_j^\prime)$. Let $D_{ij}\varphi$ denote the $2 \times 2$ matrix that is the derivative of $\varphi$ with respect to $(\omega_i, \omega_j)$.
By the Fundamental Theorem of Calculus, the definition of $\mathcal{S}$, and the Hessian estimate \eqref{eq:hessian4}, we have
\bea\label{eq:grad-est1}
\lefteqn{ \| D_{ij} \varphi (\omega_i, \omega_j) - D_{ij} \varphi (\omega_i^\prime, \omega_j^\prime) \| } \nonumber \\
 & \leq  &
( \| {\rm Hess} \mathcal{T}(\omega) \|  + \| {\rm Hess} \mathcal{T}^\prime(\omega) \| ) L^{-d} \lambda (L)^{-2} \leq  C_0 L^{-(4-3 \alpha - 4\beta)d}.
\eea
The exponent is positive if $0 < \frac{3}{4} \alpha + \beta < 1$ that is satisfied
due to \eqref{eq:constraints1}. By a Taylor's expansion and the Hessian estimate \eqref{eq:hessian4},
we obtain
\beq\label{eq:grad-est2}
\| \varphi (\omega_i, \omega_j) -  \varphi (\omega_i^\prime, \omega_j^\prime) -
D_{ij} \varphi(\omega_i^0, \omega_j^0) \cdot ({\bf{\omega}} - {\bf{\omega}^\prime}) \| \leq C L^{(3 \alpha -2)d}  \| ({\bf{\omega}}^\prime - {\bf{\omega}}) \|^2  .
\eeq
As a consequence, we can bound
the difference
$$
\| \varphi (\omega_i, \omega_j) -  \varphi (\omega_i^\prime, \omega_j^\prime) \|
$$
from below. Recall that the Jacobian determinant of $D_{ij} \varphi (\omega_i^0, \omega_j^0)$ is bounded below by $\lambda (L)$
since $(\omega_i^0, \omega_j^0) \in \mathcal{S}$. For any pair $(\omega_i, \omega_j), (\omega_i^\prime, \omega_j^\prime ) \in \mathcal{S}$,
we have $\|(\omega_i, \omega_j) - (\omega_i^\prime, \omega_j^\prime )\| \leq C L^{-(1 - 2 \beta)d}$. These facts, the Hessian estimate in \eqref{eq:grad-est1}, and the Taylor expansion in \eqref{eq:grad-est2} yield
\bea\label{eq:bijection1}
\| \varphi (\omega_i, \omega_j) -  \varphi (\omega_i^\prime, \omega_j^\prime) \| & \geq & | D_{ij} \varphi(\omega_i^0, \omega_j^0) \cdot ({\bf{\omega}}^\prime - {\bf{\omega}}) | -
C L^{(3 \alpha - 2)d} \| ({\bf{\omega}}^\prime - {\bf{\omega}}) \|^2 \nonumber \\
 & \geq & C L^{- d \beta} \| ({\bf{\omega}}^\prime - {\bf{\omega}}) \| -  C L^{- d(3 - 3 \alpha - 2 \beta)} \| ({\bf{\omega}}^\prime - {\bf{\omega}}) \| \nonumber \\
 & \geq & C_0(L)  \| ({\bf{\omega}}^\prime - {\bf{\omega}}) \| ,
  \eea
where $C_0(L) := C (L^{-d \beta} - L^{- d(3 - 3 \alpha- 2 \beta )}) > 0$ is strictly positive for $0 < \alpha + \beta < 1$.
This proves the injectivity of $\varphi$.

\noindent
3. We next show that $\varphi$ is an analytic diffeomorphism from $\mathcal{S}$ onto its range. Estimate \eqref{eq:grad-est1} implies that the Jacobians are close:
\beq\label{eq:jacobian2}
| {\rm Jac} \varphi ( \omega_i^\prime, \omega_j^\prime) - {\rm Jac} \varphi ( \omega_i^0, \omega_j^0) | \leq C L^{- (3 - 3 \alpha - 2 \beta)d} .
\eeq
Since $( \omega_i^0, \omega_j^0) \in \Omega_{0}^{i,j}(\ell, k_1, k_2)$, we know
that $| {\rm J}_{ij} ( \mathcal{T}(\omega^0) , \mathcal{T}^\prime(\omega^0)) | \geq \lambda (L)$.
This lower bound and \eqref{eq:jacobian2} imply that for all $(\omega_i, \omega_j) \in \mathcal{S}$ we have
\beq\label{eq:jacobian3}
{\rm J}_{ij} ( \mathcal{T}(\omega), \mathcal{T}^\prime (\omega)) \geq C [ L^{-d \beta} - L^{-(3 - 3 \alpha - 2 \beta )d} ] > 0,
\eeq
provided $0 < \alpha +  \beta < 1$.
Consequently, for all $(\omega_i, \omega_j) \in \mathcal{S}$  and $L$ large enough, the Inverse Function Theorem implies that $\varphi$ is an analytic diffeomorphism. Furthermore, the Jacobian of $\varphi^{-1}$ satisfies the bound
\beq\label{eq:inv-jac1}
| {\rm Jac} \varphi^{-1}(\omega_i, \omega_j)| \leq C L^{d \beta}.
\eeq

\noindent
4. To complete the proof of the lemma, we recall that the map $\omega \rightarrow \mathcal{T}_\ell ( E, k_1, \omega)$ is nondecreasing as shown in section \ref{subsec:weight-trace1}. Hence, we can consider $\|(\omega_i, \omega_j) - ( \omega_i^0, \omega_j^0) \|_\infty = L^{-d} \lambda^{-2}(L)$.
Let us suppose, to the contrary, that for some such pair $(\omega_i, \omega_j) \in \R^2$ with
$$
\| (\omega_i^0, \omega_j^0) - (\omega_i, \omega_j) \| = L^{-d} \lambda^{-2}(L) = C L^{- (1 - 2 \beta)d},
$$
one has
\beq\label{eq:config-pert3}
(\mathcal{T}_\ell (E, k_1, \omega), \mathcal{T}_\ell (E^\prime, k_2,\omega))   \in {I}_L (E) \times {J}_L (E^\prime).
\eeq
Then, using the bound \eqref{eq:inv-jac1}, we have
\bea\label{eq:contradict1}
L^{-d} \lambda^{-2} (L)= C L^{- d(1 - 2 \beta)} & < & \| (\omega_i^0, \omega_j^0) - (\omega_i, \omega_j) \| \nonumber \\
 & = & \| \varphi^{-1} ( \mathcal{T}_\ell (E, k_1, \omega), \mathcal{T}_\ell (E^\prime, k_2, \omega))
  - \varphi^{-1} ( E, E^\prime ) \| \nonumber \\
 & \leq & C L^{-d}L^{\beta d} = C L^{-d (1 - \beta)}.
 \eea
 As $L \rightarrow \infty$, we obtain a contradiction since $\beta > 0$.
 \end{proof}



\section{Asymptotically independent random variables: Proof of Theorem \ref{thm:decorrelation-lattice1}}\label{sec:proof1}

In this section, we give the proof of Theorem \ref{thm:decorrelation-lattice1}.
To prove that $\xi_E^\omega(I)$ and $\xi_{E^\prime}^\omega (J)$ are independent, we recall that the limit points $\xi_E^\omega$ are the same as
those obtained from a certain uniformly asymptotically negligible array (\cite[Proposition 4.4]{hislop-krishna1}). To obtain this array,
we construct a cover of $\Lambda_L$ by non-overlapping cubes of side length $2 \ell$ centered at points $n_p$. We use $\ell = L^\alpha$,
where $(\alpha, \beta)$ satisfy \eqref{eq:constraints1}. For example, we can take $0 < \alpha < 1/20$. The number of such cubes $\Lambda_\ell (n_p)$ is $N_L := [ (2L+1) /(2 \ell+1) ]^d$. The local Hamiltonian is $H^\omega_{p, \ell}$. The associated eigenvalue point process is denoted by $\eta_{\ell,p}^\omega$. We define the point process $\zeta^\omega_{\Lambda_L} = \sum_{p=1}^{N_L} \eta^\omega_{p, \ell}$.
For a bounded interval $I \subset \R$, we define the local random variable $\eta^\omega_{\ell,p}(I) := {\rm Tr} ( E_{H^\omega_{p,\ell}}(I_L(E)))$ and similarly for the scaled interval $J_L(E^\prime)$.
For $p \neq p'$, these random variables are independent.
We compute
\bea\label{eq:independent1}
\P \{ (\zeta^\omega_{\Lambda_L}(I) \geq 1) \cap   (\zeta^\omega_{\Lambda_L}(J) \geq 1) \} &=& \sum_{p,p^\prime = 1}^{N_L} \P \{ (\eta^\omega_{\ell,p}(I) \geq 1) \cap   (\eta^\omega_{\ell,p}(J) \geq 1) \} \nonumber \\
 &= & \sum_{p, p^\prime = 1}^{N_L} \P \{ \eta^\omega_{\ell,p}(I) \geq 1 \} \P \{  \eta^\omega_{\ell,p}(J) \geq 1 \} \nonumber \\
 & & + \mathcal{E}_L(E, E^\prime, I, J) ,
 \eea
where the error term is just the diagonal $p=p'$ contribution:
\bea\label{eq:error1}
\mathcal{E}_L(E, E^\prime, I, J)  & = &   \sum_{p = 1}^{N_L} \left[ \P \{ (\eta^\omega_{\ell,p}(I) \geq 1) \cap
 (\eta^\omega_{\ell,p}(J) \geq 1) \} \right.             \nonumber \\
 &  & \left. -
 \P \{ \eta^\omega_{\ell,p}(I) \geq 1 \} \P \{ \eta^\omega_{\ell,p}(J) \geq 1 \} \right].
 \eea

The first probability on the right side of \eqref{eq:error1} is bounded above by $C_0 L^{- 2d(1 - 2 \beta - \alpha)}$
 due to the decorrelation estimate \eqref{eq:prob1}.
The bound on the second probability on the right of \eqref{eq:error1} is $C_W^2 L^{-2d (1-\alpha)}$. It is obtained from the square of the Wegner estimate
$$
\P \{ \eta_{\ell,E^\prime}^{(p)}(J) \geq 1 \} \leq C_W (\ell / L)^d = C_W L^{-d(1 - \alpha)}.
$$
Since $N_L \sim (L / \ell)^d = L^{(1- \alpha)d}$, we find that
\beq\label{eq:error2}
  \mathcal{E}_L(E, E^\prime, I, J)  \leq C_W^2 L^{-d(1-\alpha)} + C_0 L^{-d(1 -  \alpha - 4 \beta)}  \rightarrow 0, ~~~ L \rightarrow \infty,
  \eeq
because of \eqref{eq:constraints1}.
Since the set of limit points $\zeta^\omega$ and $\xi^\omega$ are the same \cite{hislop-krishna1}, this estimate proves that
\beq\label{eq:independent2}
\lim_{L \rightarrow \infty}   \P \{ (\zeta^\omega_{E,\Lambda_L}(I) \geq 1) \cap   (\zeta^\omega_{E^\prime,\Lambda_L}(J) \geq 1) \} = \P \{ \xi^\omega_E(I) \geq 1 \} \P \{ \xi^\omega_{E^\prime} (J) \geq 1 \} ,
\eeq
establishing the asymptotic independence of the random variables $\xi^\omega_E(I)$ and $\xi^\omega_{E^\prime}(J)$ provided $|E - E^\prime | > 4d$.

\section{Bounds on eigenvalue multiplicity}\label{sec:multiplicity1}

The extended Minami estimate may be used with the Klein-Molchanov argument \cite{klein-molchanov} to bound the multiplicity of eigenvalues in the localization regime. The basic argument of Klein-Molchanov is the following. If $H_\omega$ has at least $m_k +1$ linearly independent eigenfunctions with eigenvalue $E$ in the localization regime, so that the eigenfunctions exhibit rapid decay, then any finite volume operator $H_{\omega, L}$ must have at least $m_k + 1$ eigenvalues close to $E$ for large $L$. But, by the extended Minami estimate, this event occurs with small probability.
The first lemma is a deterministic result based on perturbation theory.

\begin{lemma}\label{lemma:local-ef-est1}
Suppose that $E \in \sigma (H)$ is an eigenvalue of a self adjoint operator $H$ with multiplicity at least $m_k + 1$. Suppose that all the associated eigenfunctions decay faster than $\langle x \rangle^{- \sigma}$, for some $\sigma > d / 2 > 0$. We define $\epsilon_L := C L^{- \sigma + \frac{d}{2}}$. Then for all $L>>0$, the local Hamiltonian $H_{L} := \chi_{\Lambda_L} H \chi_{\Lambda_L}$ has at least $m_k + 1$ eigenvalues in the interval $[E-  \epsilon_L, E+  \epsilon_L]$.
\end{lemma}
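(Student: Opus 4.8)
We want: if $E$ is an eigenvalue of a self-adjoint $H$ with multiplicity $\geq m_k+1$, and all eigenfunctions decay faster than $\langle x\rangle^{-\sigma}$ with $\sigma > d/2$, then the truncation $H_L = \chi_{\Lambda_L} H \chi_{\Lambda_L}$ has at least $m_k+1$ eigenvalues in $[E-\epsilon_L, E+\epsilon_L]$ with $\epsilon_L = CL^{-\sigma+d/2}$.
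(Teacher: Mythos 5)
Your proposal contains no proof at all: after restating the lemma you stop. There is no argument to evaluate, so the entire content of the lemma is a gap. To close it you need, at minimum, the following chain of ideas (which is what the paper does). First, take an orthonormal basis $\varphi_1,\dots,\varphi_M$ ($M \geq m_k+1$) of the eigenspace and truncate: $\varphi_{j,L} := \chi_{\Lambda_L}\varphi_j$. The decay hypothesis gives $\|\chi_{\Lambda_L^{\rm c}}\varphi_j\| \leq C L^{-\sigma + d/2} = \epsilon_L$, hence
\begin{equation*}
1 - \epsilon_L \leq \|\varphi_{j,L}\| \leq 1, \qquad |\langle \varphi_{i,L},\varphi_{j,L}\rangle| \leq \epsilon_L \ (i \neq j),
\end{equation*}
so the truncated functions are linearly independent and span an $M$-dimensional subspace $V_L \subset \ell^2(\Lambda_L)$. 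Second, show they are approximate eigenfunctions of the \emph{local} operator: $\|(H_L - E)\varphi_{j,L}\| \leq \epsilon_L\|\varphi_{j,L}\|$ (this uses that $H$ has finite hopping range, so $(H_L - E)\varphi_{j,L}$ is supported where the truncation acts, i.e.\ controlled by the tail of $\varphi_j$), and consequently $\|(H_L - E)\psi\| \leq 2\epsilon_L\|\psi\|$ for every $\psi \in V_L$.

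Third --- and this is the step most often omitted --- approximate eigenfunctions do not immediately give distinct eigenvalues; you must convert the $M$-dimensional space of approximate eigenfunctions into $M$ spectral points counted with multiplicity. Set $J_L := [E - 3\epsilon_L, E + 3\epsilon_L]$, let $P_L := \chi_{J_L}(H_L)$ and $Q_L := 1 - P_L$. For $\psi \in V_L$ the spectral theorem gives $\|Q_L\psi\| \leq (3\epsilon_L)^{-1}\|(H_L - E)Q_L \psi\| \leq (2/3)\|\psi\|$, hence $\|P_L\psi\|^2 \geq (5/9)\|\psi\|^2 > 0$, so $P_L$ is injective on $V_L$ and ${\rm Tr}(P_L) \geq \dim V_L = M \geq m_k + 1$. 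Adjusting the constant $C$ in $\epsilon_L$ absorbs the factor $3$, giving $m_k+1$ eigenvalues of $H_L$ in $[E - \epsilon_L, E + \epsilon_L]$. Without all three steps --- near-orthonormality, the approximate-eigenfunction estimate, and the projection/injectivity argument --- the lemma is not proved.
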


\begin{proof}
1. Let $\{ \varphi_j ~|~ j = 1, \ldots , M \}$ be an orthonormal basis of the eigenspace for $H$ and eigenvalue $E$. We assume that the eigenvalue multiplicity $M \geq m_k + 1$. We define the local functions $\varphi_{j , L} := \chi_{\Lambda_L} \varphi_j$, for $j = 1 , \ldots, M$.
These local functions satisfy:
\bea\label{eq:local-ef1}
1 - \epsilon_L & \leq & \| \varphi_{j,L}  \|  \leq   1 , \nonumber \\
 | \langle \varphi_{i,L} , \varphi_{j, L} \rangle | & \leq & \epsilon_L, ~~~~i \neq j .
\eea
It is easy to check that these conditions imply that the family is linearly independent. Let $V_L$ denote the $M$-dimensional subspace of $\ell^2 (\Lambda_L)$ spanned by these functions.

\noindent
2. As in \cite{klein-molchanov}, it is not difficult to
prove that the functions $\varphi_{j,L}$ are approximate eigenfunctions for $H_L$:
\beq\label{eq:approx-ef1}
\| (H_L - E) \varphi_{j,L} \| \leq \epsilon_L \| \varphi_{j,L} \| .
\eeq
Furthermore, for any $\psi_L \in V_L$, we have $\| (H_L - E) \psi_L \| \leq 2 \epsilon_L \| \psi_L \|$.

\noindent
3. Let $J_L := [ E - 3 \epsilon_L , E+ 3 \epsilon_L ]$. We write $P_L$ for the spectral projector $P_L :=
\chi_{J_L}(H_L)$ and $Q_L := 1 - P_L$ is the complementary projector. For any $\psi \in V_L$, we have
$\| Q_L \psi \| \leq (3 \epsilon_L)^{-1} \| (H_L - E) Q_L \psi \| \leq (2/3) \| \psi \|.$
Since $\|P_L \psi \|^2 = \| \psi \|^2 - \| Q_L \psi \|^2 \geq (5/9) \| \psi \|$,
it follows that $P_L: V_L \rightarrow \ell^2 (\Lambda_L)$ is injective. Consequently, we have
$$
{\rm dim} \Ran P_L = {\rm Tr} (P_L ) \geq {\rm dim} ~V_L = M > m_k.
$$
Redefining the constant $C >0$ in the definition of $\epsilon_L$, we find that
$H$ has at least $m_k + 1$ eigenvalues in $[E- \epsilon_L, E+ \epsilon_L]$.
\end{proof}

The second lemma is a probabilistic one and the proof uses the extended Minami estimate.

\begin{lemma}\label{lemma:local-ef-est2}
Let $I \subset \R$ be a bounded interval. For $q > 2d$,  and any interval
$J \subset I$ with $|J| \leq L^{-q}$, we define the event
\beq\label{eq:event1}
\mathcal{E}_{L,I,q} := \{ \omega ~|~ {\rm Tr} ( \chi_{J} (H_{\omega, L})) \leq m_k ~\forall J \subset I, |J| \leq L^{-q} \} .
\eeq
Then, the probability of this event satisfies
\beq\label{eq:prob1-1}
\P \{ \mathcal{E}_{L,I,q} \} \geq 1 - C_0 L^{2d-q} .
\eeq
\end{lemma}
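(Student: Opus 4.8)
Lemma \ref{lemma:local-ef-est2} asserts a lower bound on the probability that, in a fixed bounded interval $I$, \emph{every} short subinterval $J$ of length at most $L^{-q}$ (with $q > 2d$) contains at most $m_k$ eigenvalues of the finite-volume operator $H_{\omega,L}$. The plan is to reduce this uniform-over-$J$ statement to finitely many applications of the extended Minami estimate from \cite{hislop-krishna1}, by covering $I$ with a controlled number of short intervals and paying a union-bound cost.

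\begin{proof}
1. \textbf{Covering $I$ by short intervals.} Fix the interval $I$ and set $\delta := L^{-q}$. Cover $I$ by a family of intervals $\{J_m\}$ of length $2\delta$, centered at a $\delta$-net of $I$, so that consecutive centers are spaced $\delta$ apart. The number of such intervals is $N \leq C |I| / \delta = C |I| L^{q}$. The point of the overlap is that any subinterval $J \subset I$ with $|J| \leq \delta$ is contained in at least one $J_m$: indeed, a net of spacing $\delta$ with intervals of length $2\delta$ guarantees that every interval of length $\leq \delta$ fits inside one covering interval. Consequently, if each $J_m$ carries at most $m_k$ eigenvalues, then so does every admissible $J$, and monotonicity of the eigenvalue count under inclusion gives the reduction
$$
\mathcal{E}_{L,I,q}^c \subset \bigcup_{m=1}^{N} \{ \omega ~|~ {\rm Tr}(\chi_{J_m}(H_{\omega,L})) \geq m_k + 1 \}.
$$

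2. \textbf{Applying the extended Minami estimate.} For each fixed $J_m$, the extended Minami estimate \cite{hislop-krishna1}, in the form already used to prove Lemma \ref{lemma:minami-consequence1}, bounds the probability of more than $m_k$ eigenvalues in an interval of length $|J_m| = 2\delta$ inside the spectral band by a constant times $(|J_m| L^d)^2$, i.e. by $C_M (L^{-q} L^d)^2 = C_M L^{2(d-q)}$. The quadratic gain in the interval length is exactly what the extended Minami estimate supplies beyond a Wegner bound, and it is what makes the union bound summable.

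3. \textbf{Union bound.} Combining Steps 1 and 2,
$$
\P\{ \mathcal{E}_{L,I,q}^c \} \leq \sum_{m=1}^{N} \P\{ {\rm Tr}(\chi_{J_m}(H_{\omega,L})) \geq m_k+1 \} \leq N \cdot C_M L^{2(d-q)} \leq C_0 L^{q} \cdot L^{2(d-q)} = C_0 L^{2d - q},
$$
which yields $\P\{\mathcal{E}_{L,I,q}\} \geq 1 - C_0 L^{2d-q}$ as claimed. The hypothesis $q > 2d$ is precisely what forces this error to tend to zero as $L \to \infty$.
\end{proof}

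\textbf{Main obstacle.} The only delicate point is the bookkeeping in Step 1: one must choose the net spacing and the covering-interval length so that \emph{every} subinterval of length $\leq L^{-q}$ is genuinely contained in a single $J_m$ (not merely met by several of them), since the extended Minami estimate counts eigenvalues in one interval at a time. A factor-two overlap between net spacing and interval length handles this, and the resulting multiplicative constant is harmless because $N$ only enters linearly and is absorbed into $C_0$. A secondary technical caveat is that the Minami-type bound is stated for intervals within the deterministic spectrum; since $I$ is a fixed bounded interval one restricts attention to $J_m$ meeting the spectrum, which only decreases $N$, so the estimate is unaffected.
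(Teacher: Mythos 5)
Your proof is correct and follows essentially the same route as the paper: cover $I$ by roughly $L^{q}|I|$ overlapping intervals of length $2L^{-q}$ so that every admissible $J$ lies inside one of them, apply the extended Minami estimate $\P\{{\rm Tr}(\chi_{J_m}(H_{\omega,L})) > m_k\} \leq C_M (L^{-q}L^d)^2$ to each, and conclude by a union bound giving $C_0 L^{2d-q}$. The paper's argument is identical, down to the factor-two overlap in the covering and the absorption of constants.
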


\begin{proof}
 We cover the interval $I$ by $2 ( [ L^q |I| /2] +1)$ subintervals of length $2 L^{-q}$ so that any subinterval $J$ of length $L^{-q}$ is contained in one of these. We then have
\beq\label{eq:minami-prob1}
\P \{ \mathcal{E}_{L,I,q}^{\rm c} \} \leq (L^q |I| + 2) \P  \{ \chi_J(H_{\omega,L}) > m_k \}.
\eeq
The probability on the right side is estimated from the extended Minami estimate
\beq\label{eq:minami-prob2}
  \P  \{ \chi_J(H_{\omega,L}) > m_k \} \leq C_M (L^{-q} L^d)^2 = C_M L^{2(d-q)},
  \eeq
so that
\beq\label{eq:minami-prob3}
\P \{ \mathcal{E}_{L,I,q}^{\rm c} \} \leq C_M (L^q |I| + 2) L^{2(d-q)} = C_M (|I| +1)L^{2d-q}.
\eeq
This establishes \eqref{eq:prob1-1}.
\end{proof}

\begin{thm}\label{thm:multiplicity1}
Let $H^\omega$ be the generalized Anderson Hamiltonian described in section \ref{sec:introduction} with perturbations $P_i$ having uniform rank $m_k$.  Then the eigenvalues in the localization regime have multiplicity at most $m_k$ with probability one.
\end{thm}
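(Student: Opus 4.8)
The plan is to combine the two lemmas just proved---the deterministic perturbation estimate (Lemma \ref{lemma:local-ef-est1}) and the probabilistic consequence of the extended Minami estimate (Lemma \ref{lemma:local-ef-est2})---with a Borel--Cantelli argument along a sequence of length scales, exactly in the spirit of Klein--Molchanov \cite{klein-molchanov}. The target is to show that, almost surely, no eigenvalue of $H^\omega$ lying in the region of complete localization $\Sigma_{\rm CL}$ can have multiplicity exceeding $m_k$. The overall strategy is a proof by contradiction: assume that with positive probability there exists an energy $E \in \Sigma_{\rm CL}$ that is an eigenvalue of multiplicity at least $m_k+1$, and derive that this forces an event of vanishing probability to occur infinitely often.

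First I would fix a bounded interval $I$ meeting $\Sigma_{\rm CL}$ and exploit localization: in the region of complete localization, all eigenfunctions associated with eigenvalues in $I$ decay faster than $\langle x \rangle^{-\sigma}$ for any $\sigma$ (or at least for some $\sigma > d/2$), with probability one. This is the input that lets me apply Lemma \ref{lemma:local-ef-est1}. Thus, on the event that $H^\omega$ has an eigenvalue $E \in I$ of multiplicity $\geq m_k+1$, the deterministic lemma guarantees that for all sufficiently large $L$ the truncated operator $H_{\omega,L}$ has at least $m_k+1$ eigenvalues in the shrinking interval $[E-\epsilon_L, E+\epsilon_L]$, where $\epsilon_L = C L^{-\sigma + d/2}$. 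By choosing $\sigma$ large enough (localization permits arbitrarily fast decay), I can arrange $\epsilon_L \leq \tfrac12 L^{-q}$ for a fixed exponent $q > 2d$, so that the interval $[E-\epsilon_L, E+\epsilon_L]$ has length at most $L^{-q}$ and lies inside $I$.

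Next I would bring in Lemma \ref{lemma:local-ef-est2}: for this same $q > 2d$, the event $\mathcal{E}_{L,I,q}$ that \emph{every} subinterval $J \subset I$ of length $\leq L^{-q}$ satisfies ${\rm Tr}(\chi_J(H_{\omega,L})) \leq m_k$ has probability at least $1 - C_0 L^{2d-q}$. The failure event $\mathcal{E}_{L,I,q}^{\rm c}$ therefore has probability at most $C_0 L^{2d-q}$. The contradiction now materializes: the existence of a multiplicity-$(m_k+1)$ eigenvalue forces, for all large $L$, the operator $H_{\omega,L}$ to have more than $m_k$ eigenvalues in some subinterval of length $\leq L^{-q}$, i.e. it forces $\mathcal{E}_{L,I,q}^{\rm c}$ to hold. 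To convert this into a sure statement I would pass to a discrete sequence of scales $L_n$, say $L_n = 2^n$, and note that since $q > 2d$ the probabilities $C_0 L_n^{2d-q}$ are summable; by Borel--Cantelli, almost surely $\mathcal{E}_{L_n,I,q}^{\rm c}$ occurs only finitely often, so $\mathcal{E}_{L_n,I,q}$ holds for all large $n$. This contradicts the previous paragraph, so almost surely $H^\omega$ has no eigenvalue in $I$ of multiplicity exceeding $m_k$. Finally I would exhaust $\Sigma_{\rm CL}$ by countably many bounded intervals $I$ and take the countable intersection of the corresponding almost-sure events to conclude.

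The main obstacle I anticipate is the interface between the deterministic and probabilistic scales. Lemma \ref{lemma:local-ef-est1} controls eigenvalues in a window of radius $\epsilon_L$ tied to the decay rate $\sigma$, whereas Lemma \ref{lemma:local-ef-est2} controls windows of radius $L^{-q}$ with $q > 2d$ fixed; reconciling these requires that localization deliver a decay exponent $\sigma$ large enough that $\epsilon_L = CL^{-\sigma+d/2}$ beats $L^{-q}$, uniformly over the relevant eigenfunctions. Making this uniform---so that a single $\sigma$ (hence a single $q$) works for the whole interval $I$ almost surely---is the delicate point, and is exactly where the quantitative content of the region of complete localization $\Sigma_{\rm CL}$ must be invoked (via the SULE/eigenfunction-decay characterization in \cite{germinet-klopp}). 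The rest is bookkeeping: verifying that the summability condition $q > 2d$ is compatible with the achievable $\sigma$, and that the countable-cover argument introduces no measure-theoretic gaps.
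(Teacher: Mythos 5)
Your proposal is correct and follows essentially the same route as the paper: both combine Lemma \ref{lemma:local-ef-est1} and Lemma \ref{lemma:local-ef-est2} via Borel--Cantelli along dyadic scales $L_k = 2^k$ to derive a contradiction with the existence of a multiplicity-$(m_k+1)$ eigenvalue, exactly in the Klein--Molchanov spirit. Your write-up is in fact somewhat more careful than the paper's on the exponent bookkeeping (matching $\epsilon_L = CL^{-\sigma+d/2}$ against the window size $L^{-q}$, $q>2d$) and on exhausting $\Sigma_{\rm CL}$ by countably many intervals, but these are refinements of the same argument rather than a different approach.
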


\begin{proof}
We consider a length scale $L_k = 2^k$. It follows from \eqref{eq:prob1-1} that
the probability of the complementary event $\mathcal{E}_{L_k,I,q}^{\rm c}$
is summable. By the Borel-Cantelli Theorem, that means  for almost every $\omega$ there is a $k(q, \omega)$ so that for all $k > k(q, \omega)$
the event $\mathcal{E}_{L_k,I,q}$ occurs with probability one. Let us suppose that $H^\omega$ an eigenvalue with multiplicity at least $m_k +1$ in an interval $I$ and that the corresponding eigenfunctions decay exponentially. Then, by Lemma \ref{lemma:local-ef-est1}, the local Hamiltonian $H_{\omega, L_k}$ has at least $m_k + 1$ eigenvalues in the interval $[E- \epsilon_L, E+ \epsilon_L]$ where $\epsilon_L = CL^{-(\beta - \frac{d}{2})}$, for any $\beta > 5 d/2$.  This contradicts the event $\mathcal{E}_{L_k,I,q}$ which states that there are no more than $m_k$ eigenvlaues in any subinterval $J \subset I$ with $|J| \leq L^{-q}$ since we can find $q > 2d$ so that $\beta - \frac{q}{2} > q$.
\end{proof}

It appears that the simplicity of eigenvalues in the localization regime might be enough to imply a Minami estimate.
Further investigations on the simplicity of eigenvalues for Anderson-type models may be found in the article by Naboko, Nichols, and Stolz \cite{naboko-nichols-stolz}

\end{document}